\documentclass[12pt,notitlepage]{article}
\usepackage{amsthm}
\usepackage[utf8]{inputenc}
\usepackage[english]{babel}
\usepackage[T1]{fontenc}
\usepackage{hyperref}
\usepackage{amsmath}
\usepackage{amsfonts}
\usepackage{xcolor} 
\usepackage{caption}
\usepackage{verbatim}
\usepackage{array,multirow,makecell}
\setcellgapes{1pt}
\makegapedcells
\newcolumntype{R}[1]{>{\raggedleft\arraybackslash }b{#1}}
\newcolumntype{L}[1]{>{\raggedright\arraybackslash }b{#1}}
\newcolumntype{C}[1]{>{\centering\arraybackslash }b{#1}}

\newcommand{\tr}{\mathop{\text{tr}}\nolimits}
\newcommand{\val}{\mathop{\text{val}}\nolimits}

\newtheorem{proposition}{Proposition}
\newtheorem{conjecture}{Conjecture}
\newtheorem{corollary}{Corollary}

\newtheorem{lemma}{Lemma}
\usepackage{enumerate}
\usepackage{subfig}
\usepackage{appendix}

%\usepackage{showkeys}%pour les labels dans la marge

%%%%%%%%%%%%%%%%%%%%%%%%%%% biblio with biblatex -- comment if not needed

%\usepackage[backend=bibtex,style=phys]{biblatex}
%\addbibresource{biblio.bib}

%%%%%%%%%%%%%%%%%%%%%%%%%%%%

% Calligraphic letters
%----------------------
\newcommand{\cA}{{\mathcal A}}
\newcommand{\cB}{{\mathcal B}}

\newcommand{\cF}{{\mathcal F}}
\newcommand{\cG}{{\mathcal G}}
\newcommand{\cH}{{\mathcal H}}

\newcommand{\cS}{{\mathcal S}}

\def\uN{\text{U}(N)}

\def\oN{\text{O}(N)}
\def\oD{\text{O}(D)}

\def\cvp{\raise 2pt\hbox{,}}

\newcommand{\bP}{{\mathbf P}}

\definecolor{mygray}{gray}{0.3}

\newcommand\beq{\begin{equation}}
\newcommand\eeq{\end{equation}}
\newcommand{\bes}{\begin{eqnarray}}
\newcommand{\ees}{\end{eqnarray}}

\def\vphi{{\varphi}}

\newcommand\restr[2]{{% we make the whole thing an ordinary symbol
  \left.\kern-\nulldelimiterspace % automatically resize the bar with \right
  #1 % the function
  \vphantom{\big|} % pretend it's a little taller at normal size
  \right|_{#2} % this is the delimiter
  }}

\usepackage{multicol}
\usepackage{amssymb}
\usepackage{graphicx}
\usepackage[left=2cm,right=2cm,top=2.5cm,bottom=2.5cm]{geometry}

% Page formatting that works well both in Europe and America (better than a4)
\oddsidemargin 0in
\textwidth 6.5in\topmargin 0in\headheight 0in
\textheight 9in\parskip .75ex
\voffset -.75cm\hoffset .5cm
%%%%%%%%%%

\begin{document}

\
\vfill

\begin{center}
\textbf{\Large{On the large $D$ expansion of}}

\smallskip

\textbf{\Large{Hermitian multi-matrix models}}
\vspace{15pt}

\vfill

{\large Sylvain Carrozza,$^{a,}$\footnote{\url{scarrozza@perimeterinstitute.ca}} Frank Ferrari,$^{b,}$\footnote{\url{frank.ferrari@ulb.ac.be}} Adrian Tanasa$^{c,}$\footnote{\url{ntanasa@u-bordeaux.fr}}}

{\large and Guillaume Valette$^{b,}$\footnote{\url{guillaume.valette@ulb.ac.be}}}

\vspace{10pt}

$^{a}${\sl Perimeter Institute for Theoretical Physics\\
 31 Caroline St N, Waterloo, ON N2L 2Y5, Canada\\
}

\vspace{3pt}

$^{b}${\sl Service de Physique Th\'eorique et Math\'ematique\\
Universit\'e Libre de Bruxelles (ULB) and International Solvay Institutes\\
Campus de la Plaine, CP 231, B-1050 Brussels, Belgium, EU}

\vspace{3pt}

$^{c}${\sl LaBRI, Univ.\ Bordeaux\\
351 cours de la Lib\'{e}ration, 33405 Talence, France, EU\\
H.\  Hulubei  Nat.\  Inst.\  Phys.\  Nucl.\  Engineering,  \\
P.O.Box  MG-6,  077125  Magurele,  Romania,  EU\\ I.\  U.\  F.,  1  rue  Descartes,  75005  Paris,  France, EU
}

%\vspace{5pt}

\end{center}

\vspace{5pt}

\begin{abstract}

\noindent We investigate the existence and properties of a double asymptotic expansion in $1/N^{2}$ and $1/\sqrt{D}$ in $\uN\times\oD$ invariant Hermitian multi-matrix models, where the $N\times N$ matrices transform in the vector representation of $\oD$. The crucial point is to prove the existence of an upper bound $\eta(h)$ on the maximum power $D^{1+\eta(h)}$ of $D$ that can appear for the contribution at a given order $N^{2-2h}$ in the large $N$ expansion. We conjecture that $\eta(h)=h$ in a large class of models. In the case of traceless Hermitian matrices with the quartic tetrahedral interaction, we are able to prove that $\eta(h)\leq 2h$; the sharper bound $\eta(h)= h$ is proven for a complex bipartite version of the model, with no need to impose a tracelessness condition. We also prove that $\eta(h)=h$ for the Hermitian model with the sextic wheel interaction, again with no need to impose a tracelessness condition.

\end{abstract}

\vfill\eject

%\setcounter{tocdepth}{1}
%\tableofcontents
%\pagebreak
\setcounter{footnote}{0}

\section{Introduction}

Since its introduction by 't~Hooft in 1974 \cite{tHooftplanar}, the large $N$ expansion of matrix models has been extensively studied and has found a wealth of applications in theoretical and mathematical physics, from QCD to random geometry, string theory and black hole physics; see e.g.\ \cite{revMM} for useful books and reviews. The case of multi-matrix models is particularly important but also very difficult. Except in some very special instances \cite{revsolveMM}, multi-matrix models remain intractable, even in the leading, planar, large $N$ approximation.

We focus on an interesting class of so-called Hermitian matrix-vector models. These models have a $\uN\times\oD$ symmetry, with Hermitian matrices $X_{\mu}$, $1\leq\mu\leq D$, transforming in the adjoint representation of $\uN$ and in the vector representation of $\oD$.\footnote{Other cases, for instance $\oN\times\oD$ invariant models with real symmetric or antisymmetric matrices, can be discussed similarly.} The extra $\oD$ symmetry allows one to study the large $D$ limit on top of the usual large $N$ limit, to further simplify the problem. 

The most straightforward implementation of this idea, which is to take the limit $D\rightarrow\infty$ at fixed $N$, yields a rather trivial result: the Feynman graphs at leading order are ``trees of bubbles,'' subleading orders are given by a loopwise expansion with loop expansion parameter $1/D$, and the matrix model effectively degenerates to a vector model \cite{vectorrev}. A much more interesting large $D$ limit was introduced in \cite{Frank}. The idea is to enhance the large $D$ scaling of some of the 't~Hooft's couplings, in a way that will be reviewed below, so that more graphs contribute at each order in the expansion. The goal is to find a new approximation scheme which is more tractable than the original large $N$ limit but which is also able to capture the correct qualitative non-perturbative physics of large $N$ matrix models. In several interesting cases, the leading graphs are melons or generalized melons \cite{Frank,FRV}. This property was discovered before in tensor models (see e.g.\ \cite{revtensors1,Bonzom_2011,revtensors2,revtensors3,CT,Witten,revtensors4}) and some disordered models in condensed matter physics (see e.g.\ \cite{revCM}) and has been at the origin of many recent developments (see e.g.\ \cite{revdev} for reviews). Note that the relation between our matrix models and tensor models is not surprising: our matrices have a three-index structure $X^{a}_{\mu\, b}$ and, even if the indices $a,b$ and $\mu$ transform under distinct symmetry groups, $X$ may formally be viewed as a tensor of rank three.

The enhancement of the couplings at large $D$ introduces a crucial new feature: the large $N$ and the large $D$ limits do not commute. In fact, the $D\rightarrow\infty$ limit is ill-defined at fixed $N$, because one can find contributions to physical observables that are proportional to an arbitrarily high power of $D$. Yet, the large $D$ limit can still be defined consistently, in the following way. 

Focusing on the free energy for concreteness, we first take $N\rightarrow\infty$ and get the usual $1/N^2$ expansion
\beq\label{genusExp} \mathcal{F}(\lambda;N,D) = \sum_{h\in\mathbb{N}} N^{2-2h} F_{h}(\lambda;D)\, .\eeq
In the above formula, $\lambda$ denotes collectively all the coupling constants of the model, which are independent of $N$ and $D$ after taking into account the appropriate 't~Hooft's large $N$ and enhanced large $D$ scalings. The contributions $F_{h}$ can themselves be expanded perturbatively in a power series of the couplings,
\beq\label{perturbexp} F_{h}(\lambda;D)=\sum_{k\in\mathbb N}F_{h}^{(k)}(D)\lambda^{k}\, ,\eeq
where $\lambda^{k}$ denotes any degree $k$ monomial in the couplings. Each coefficient $F_{h}^{(k)}(D)$ is given by a sum over Feynman graphs with $k$ vertices. Note that this sum is always finite, because there is a finite number of graphs with a fixed number of vertices. At large $D$, $F_{h}^{(k)}(D)$ scales as $D^{1+\eta_{k}(h)}$ and we define
\beq\label{chidef}\boxed{\eta(h)=\sup_{k\in\mathbb N}\eta_{k}(h)\, .}\eeq
Due to the form of the enhanced large $D$ scaling of the coupling constants, we will see in the models studied in this paper that the $\eta_{k}(h)$ must be integers or half-integers. If $\eta(h)$ is finite, each term $F_{h}$ then has a well-defined large $D$ expansion, with expansion parameter $1/\sqrt{D}$, of the form
\beq\label{largeDexp} F_{h}(\lambda;D)=\sum_{\ell\in\mathbb N}D^{1+\eta(h)-\frac{\ell}{2}}F_{h,\ell}(\lambda)\, .\eeq
The main purpose of our work is to evaluate $\eta(h)$. We conjecture that
\beq\label{conjecture} \boxed{\eta(h) = h}\eeq
in a large class of interesting models and we check this conjecture in some interesting non-trivial examples. Note that the function $\eta$ is not bounded above, which explains why the large $D$ limit cannot be defined at fixed $N$. Note also that the existence of the large $D$ expansion is ensured as soon as one can prove that $\eta(h)<\infty$. In the particularly interesting case of the traceless Hermitian matrix model with the tetrahedral interaction term $\tr X_{\mu}X_{\nu}X_{\mu}X_{\nu}$, we shall prove that $\eta(h)\leq 2h$, which is enough to ensure the consistency of the large $D$ limit at any order in the $1/N^{2}$ expansion. We believe that the sharper result \eqref{conjecture} is valid in this case, but this remains unproven. The sharp bound $\eta(h)=h$ will be proven for a complex bipartite version of the tetrahedral model as well as for the Hermitian model with the wheel interaction term $\tr X_{\mu}X_{\nu}X_{\rho}X_{\mu}X_{\nu}X_{\rho}$, with no need to impose the $\tr X_{\mu}=0$ condition.

The conjecture \eqref{conjecture} has been proven in full generality when the $\uN\times\oD$ symmetry of the models is enhanced to $\uN_{\text L}\times\uN_{\text R}\times\oD$, with the $\uN_{\text L}\times\uN_{\text R}$ acting as $X_{\mu}\mapsto U_{\text L}X_{\mu}U_{\text R}^{-1}$ \cite{Frank}. This is of course possible only if the Hermitian constraint is waived and the matrices are complex. In this case, the indices $a$, $b$ and $\mu$ in $X^{a}_{\mu\, b}$ transform independently and are associated with distinguishable cycles (or faces) in the stranded representation of Feynman graphs. Equivalently, the Feynman graphs have a so-called colored graph representation which can be analysed using techniques developed and reviewed e.g.\ in \cite{revtensors1,revtensors2,revtensors3,Frank,Azeyanagi:2017mre,FRV}. In particular, in \cite{FRV}, a very general theory of tensor and matrix-tensor models is presented, which works for any rank and any interaction term, as long as each index of the tensors or the matrix-tensors transform with respect to distinct symmetry groups.

The relation \eqref{conjecture} also allows the definition of an interesting double scaling limit $N\rightarrow\infty$, $D\rightarrow\infty$, $D/N^{2}=\text{constant}$, that we briefly sketch in Section \ref{sec:dble} and that has already been investigated in detail in the $\uN_{\text L}\times\uN_{\text R}\times\oD$ context \cite{double-scaling-largeD}. 

Our goal in the present paper is to focus on the Hermitian case, for which only the diagonal $\uN$ subgroup of $\uN_{\text L}\times\uN_{\text R}$ is preserved and thus for which the colored graph technology cannot be used.

Going beyond the realm of colored graphs is a big and highly non-trivial leap forward, for both tensor and matrix-vector models. Deprived of the beautiful colored graph technology, the existence of the large $N$ (for tensors) or large $D$ (for matrix-vectors) limit looks like a miracle. The first evidence that the results obtained with colored graphs could be generalized to the case of Hermitian matrices was provided in \cite{Frank}, where it was shown that the large $D$ limit of the Hermitian models, including in particular models with the tetrahedral interaction $\tr X_{\mu}X_{\nu}X_{\mu}X_{\nu}$, exists at the planar level, with $\eta(0)=0$, and that the leading Feynman graphs still have the standard melonic structure. However, the argument of \cite{Frank} cannot be generalized beyond the planar case and it was actually known that $\eta(1)=\infty$ due to the chain-of-tadpoles graphs depicted in Fig.\ \ref{fig:tadpole_chain}. The same class of graphs also ruin the large $N$ limit of a model of symmetric tensors with the tetrahedral interaction, for instance. In an interesting paper \cite{Klebanov:2017nlk}, Klebanov and Tarnopolsky conjectured that this problem could be cured by working with traceless symmetric tensors. This was proven in \cite{Benedetti:2017qxl}, following a pioneering work by Gurau \cite{Gurau:2017qya} on a particular bipartite model for which the tracelessness condition is irrelevant. The results of \cite{Benedetti:2017qxl} actually apply to any rank-three tensor model with tetrahedral interaction, for which the tensor transforms in an irreducible representation of $\oN$ or $\mathrm{Sp}(N)$ (including mixed representations of the permutation group) \cite{Carrozza:MixedPerm,Carrozza:2018psc}.

The proofs in \cite{Gurau:2017qya} and \cite{Benedetti:2017qxl} rely on a detailed, often tedious, combinatorial analysis of the Feynman graphs.  
We do not know yet of a general method that allows to determine whether a particular model supports a well-defined large $N$ expansion; results available in the literature still rely on case by case investigations.
Our analysis below of the large $D$ limit of matrix-vector models is no different. Even though one can easily state general conjectures like \eqref{conjecture} in a large class of models, see Section \ref{sec:complex}, our present understanding requires to invent proofs on a case-by-case basis. The case of the traceless Hermitian model with the tetrahedral interaction $\tr X_{\mu}X_{\nu}X_{\mu}X_{\nu}$ is discussed in Section \ref{sec:Hermitian}. We are able to derive that $\eta(h)\leq 2h$. Some of the results in \cite{Benedetti:2017qxl} are used in a crucial way to obtain this upper bound. This proves the existence of the large $D$ limit in general, extending the result of \cite{Frank} from planar graphs to any order in the $1/N^{2}$ expansion. Proving the sharp bound \ref{conjecture} remains a difficult open question in this case. We obtain it for a version of the model for which the matrices $X_{\mu}$ are complex with no tracelessness condition and for which the Feynman graphs have a bipartite structure, with interaction term $\lambda\tr X_{\mu}X_{\nu}X_{\mu}X_{\nu}+\text{H.\ c.}$. In section \ref{sec:hermitiansix}, we prove the sharp bound \ref{conjecture} for the Hermitian model with the wheel interaction $\tr X_{\mu}X_{\nu}X_{\rho}X_{\mu}X_{\nu}X_{\rho}$, with no need to impose the tracelessness condition. To the best of our knowledge, this is the first time an interaction of degree greater than four is treated beyond the scope of colored graph techniques. We also briefly discuss some much simpler models at the end of Section \ref{sec:gensecherm}, normal-ordered or bipartite with no maximally single trace interactions, for which the existence of the large $D$ limit can be easily derived.

%%%%%%%%%%%%%%%%%%%%%%%%%%%%%%%%%%%%%%%%%%%%%%%%%%%%%%%%%%%%%%%%%%%
%%%%%%%%%%%%%%%%%%%%%%%%%%%%%%%%%%%%%%%%%%%%%%%%%%%%%%%%%%%%%%%%%%%

\section{\label{sec:complex}Models and conjectures}

We limit ourselves to models with a single matrix-vector $X_{\mu}$ and with single-trace interaction terms. The discussion is essentially unchanged if several matrices $X_{\mu}$, $Y_{\mu}$, etc., are included: our analysis of the Feynman graphs can be repeated by forgetting about the distinction between the different matrix-vectors. The number of space-time dimensions is also irrelevant, and thus we work in zero dimension for simplicity.

\subsection{\label{reviewFRV}Brief review of the $\uN^{2}\times\oD$ symmetric models}

The models with $\uN_{\text L}\times\uN_{\text R}\times\oD$ symmetry \cite{Frank} have an action of the form
\beq\label{ModelColored}
S = ND \, \Bigl( \tr X_\mu^\dagger X_\mu  +\sum_{a} \lambda_a D^{g(\cB_a)} I_{\cB_a}(X,X^\dagger) \Bigr) \, ,\eeq
where the interaction terms read
\beq\label{IBdef1} I_{\cB_a} = \tr X_{\mu_{1}}X_{\mu_{2}}^{\dagger}X_{\mu_{3}}X_{\mu_{4}}^{\dagger}X_{\mu_{5}}\cdots X_{\mu_{2s_{a}}}^{\dagger}\, ,\eeq
with pairwise identifications and summations over the $\oD$ indices $\mu_{i}$. These identifications are encoded in a 3-colored graph, also called a $3$-bubble, $\cB_a$. Note that $2s_{a}$ is the order of the interaction, associated with the coordination number, or valency, of the interaction vertices in Feynman graphs. The coupling is enhanced by the factor $D^{g(\cB_{a})}$ compared to the standard large $D$ scaling. This enhancement is governed by the so-called genus $g(\cB_{a})$ of the 3-bubble $\cB_{a}$. We refer to \cite{Frank,FRV} for detailed definitions. For our purposes, all we need to know is that the genus takes the form
\beq\label{GenusColored2}
g(\cB_a)=\frac{s_a-1}{2} - \frac{x(\cB_a)}{2} \, \cvp\eeq
where $x(\cB_{a})$ is a non-negative integer. The interaction terms for which $x(\cB_{a})=0$ form a privileged class called maximally single-trace (MST) \cite{FRV}. The MST interactions are characterized by the fact that they can be written as single-traces, when the traces are computed with respect to any choice of pair of indices, including the mixed choices for which we pair $\uN$ and $\oD$ indices. The paradigmatic example is the so-called tetrahedral interaction
\beq\label{tetradef} I_{\cB_{T}}(X,X^{\dagger})=\tr X_{\mu}X_{\nu}^{\dagger}X_{\mu}X_{\nu}^{\dagger}\, .\eeq
If we introduce the matrices $Y^{a}$ and $Z_{b}$, with matrix elements $(Y^{a})_{b\mu} =(Z_{b})^{a}_{\ \mu}=X^{a}_{\mu\, b}$, together with their complex congugates and transposes,
it is straightforward to check the MST property: $I_{\cB_{T}} = \tr Y^{a}\, {}^{\text T} Y^{b}\,\bar Y_{a}\, Y_{b}^{\dagger}=\tr Z_{a}\, {}^{\text T}Z_{b}\,\bar Z^{a}\, Z^{\dagger}_{b}$. It is easy to check that the tetrahedron is the only MST interaction of order four, and that there are two possibilities at order six, the so-called wheel and prism interactions,
\begin{align}\label{wheeldef} & I_{\cB_{W}}(X,X^{\dagger})=\tr X_{\mu}X_{\nu}^{\dagger}X_{\rho}X_{\mu}^{\dagger}X_{\nu}X_{\rho}^{\dagger}\, ,\\
\label{prismdef} & I_{\cB_{P}}(X,X^{\dagger})=\tr X_{\nu}X_{\mu}^{\dagger}X_{\nu}X_{\rho}^{\dagger}X_{\mu}X_{\rho}^{\dagger}\, .\end{align}

The existence of the large $D$ limit of the models \eqref{ModelColored} was proven in \cite{Frank}. For all these models, the large $N$ counting parameter $h$ coincides with the genus of the graphs and it is known \cite{FRV} that the relation \eqref{conjecture} is true if one includes a MST interaction term. This can be generalized to $\oN\times\oD^{r}$ invariant models with multi-trace interactions \cite{FRV}. These results also imply similar results for standard tensor models after setting $D=N$, including the multi-orientable tensor model of \cite{revtensors3}. The expansions \eqref{genusExp} and \eqref{largeDexp}, with $\eta(h)=h$, then combine into a single expansion
\beq \label{Model1ExpLargeN}
\mathcal F(\lambda;N)  =  \sum_{\omega\in\frac{1}{2}\mathbb{N}} N^{3-\omega} F_{\omega}(\lambda) \, ,\eeq
where the half-integer $\omega = h + \frac{\ell}{2}$ coincides with the index defined in \cite{FRV} or equivalently with the degree defined in \cite{CT} in the quartic model. Note finally that the leading graphs contributing at order $N^{2}D$ are the generalized melons defined in \cite{Frank,FRV}. Their complete classification, which has been achieved only in special cases \cite{CT,FRV}, remains an open probem.

\subsection{\label{sec:conj}Conjectures}

A natural and ambitious generalization of the above results is to consider models of traceless Hermitian matrices $X_{\mu}$ with the general action \eqref{ModelColored} in which we set $X_{\mu}^{\dagger}=X_{\mu}$. A strong conjecture for these models is that the large $D$ limit exists at all orders in the $1/N^{2}$ expansion, with Eq.\ \eqref{conjecture}, $\eta(h)=h$, being valid as soon as we include a MST interaction term. A weaker version of the conjecture is simply that $\eta(h)<\infty$ for all $h$, which ensures the consistency of the large $D$ expansion at all orders in $1/N^{2}$. We can also consider more restrictive versions of these conjectures, for instance by limiting ourselves to models in which only one MST interaction term is turned on. We prove below that $\eta(h)\leq 2h$ for the model with the tetrahedral interaction \eqref{tetradef} and that $\eta(h)=h$ for the model with the wheel interaction \eqref{wheeldef}. In the latter case, there is no need to impose that $\tr X_{\mu}=0$.

Another technically interesting generalization is to consider models of complex matrices $X_{\mu}$, with no traceless constraint, but for which the Feynman graphs are bipartite. This restriction allows to consider only a subclass of the graphs relevant in the Hermitian case, for which the analysis is significantly simpler. These models have an action
\beq\label{ModelBipartiteColored}
S = ND \, \Bigl( \tr X_\mu^\dagger X_\mu +\sum_{a} D^{g(\cB_a)}\bigl( \lambda_a I_{\cB_a}(X,X)+ \text{H.\ c.}\bigr) \Bigr) \, .\eeq
Note that even though the matrices are complex, the interaction terms respect only the diagonal subgroup of $\uN_{\text L}\times\uN_{\text R}$. In Section \ref{sec:bitetra}, we show the strong form of the conjecture, $\eta(h)=h$, for the tetrahedral model of this type. At the end of Section \ref{sec:gensecherm}, we also show by a simple argument that the large $D$ limit exists at all orders in the $1/N^{2}$ expansion for any bipartite model that does not include MST interaction terms.

\subsection{\label{sec:dble}Double scaling limit}

An interesting consequence of the conjecture \eqref{conjecture}, or more generally of the existence of a linear law $\eta(h)=\kappa h$, $\kappa>0$, is to allow the definition of a double scaling limit
\beq\label{NDdblescale} N\rightarrow\infty\, ,\quad D\rightarrow\infty\, ,\quad \frac{N}{D^{\kappa/2}}= M < \infty\, ,\eeq
for which the expansions \eqref{genusExp} and \eqref{largeDexp} combine into
\beq \label{eq:free-en-0}
\underset{ \substack{N,D\to\infty \\ M<\infty} }{\mathrm{lim}} \frac{1}{D^{1+\kappa}} \cF(\lambda;N,D) = \sum_{h\geq 0} M^{2-2h}  F_{h,0}(\lambda)\, .
\eeq
The leading order in this expansion is dominated by the $\ell=0$ graphs \emph{of any genus} and thus encodes informations about the matrix model at all orders in $1/N^{2}$. A discussion of this expansion and the classification of the $\ell=0$ graphs in the $\uN_{\text L}\times\uN_{\text{R}}$ symmetric tetrahedral model has recently appeared \cite{double-scaling-largeD}.

%%%%%%%%%%%%%%%%%%%%%%%%%%%%%%%%%%%%%%%%%%%%%%%%%%%%%%%%%%%%%%%%%%%
%%%%%%%%%%%%%%%%%%%%%%%%%%%%%%%%%%%%%%%%%%%%%%%%%%%%%%%%%%%%%%%%%%%

\section{Traceless Hermitian tetrahedral interaction}
\label{sec:Hermitian}

We consider the traceless Hermitian model with the tetrahedral interaction term \eqref{tetradef}. After performing a convenient rescaling of the matrix variables, the action reads 
\beq \label{Model2ActionRescaled}
S = \frac{1}{2}\tr X_\mu X_\mu + \frac{\lambda}{4 N\sqrt{D}} \, \tr X_\mu X_\nu X_\mu X_\nu \, .\eeq
Taking into account the tracelessness condition, the free propagator is given by
\beq \label{Model2FreeProp}
\bigl\langle X^{a}_{\mu\, b} X^{c}_{\nu\, d} \bigr\rangle_0 = \delta_{\mu\nu}\Bigl( \delta^a_d \delta^c_b - \frac{1}{N} \delta^a_b \delta^c_d \Bigr) := \bP^{a}_{\mu b, }{}^{c}_{\nu d}
\, .\eeq
We focus on the free energy (vacuum amplitude) $\mathcal F(\lambda;N,D)$, which has the usual formal matrix integral representation
\beq \label{modelPartitionFunction}
e^{-\mathcal F(\lambda;N,D)} = \int [\text{d} X] \, e^{-S} \,,
\eeq
where $[\text{d} X]$ is the standard $\uN$-invariant measure on Hermitian matrices. Note that this is not restrictive: the existence of the large $D$ expansion for correlation functions can be easily derived from the analysis of the connected vacuum graphs.

\subsection{\label{sec:ganda}Graphs and amplitudes}

\subsubsection{Feynman graphs, stranded graphs, primary and descendants}

\begin{figure}
\centerline{\includegraphics[scale=1]{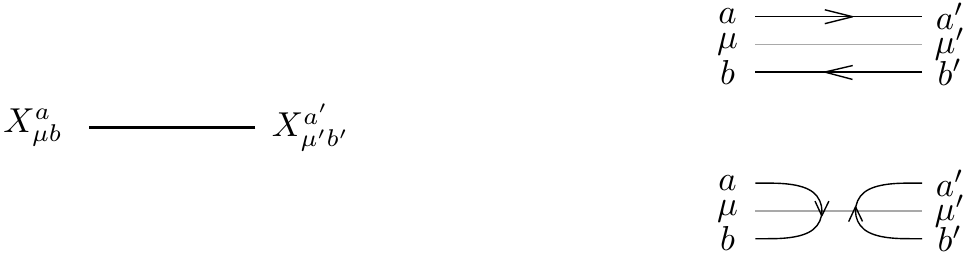}}
\vskip 1cm
\centerline{\includegraphics[scale=1]{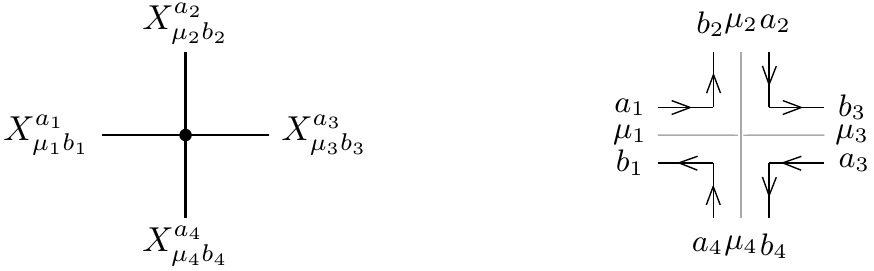}}
\caption{Edge and vertex for the model \eqref{Model2ActionRescaled}, in the non-stranded (left inset) and stranded (right inset) representations. In the stranded representation, the edge split into the sum of two contributions (unbroken and broken) associated with the two terms in the propagator \eqref{Model2FreeProp}.\label{fig:EdgeVertex}}
\end{figure}

Connected Feynman graphs, which are denoted as $\cG$, are built from edges and quadrivalent vertices, see Fig.\ \ref{fig:EdgeVertex}, left inset. The total number of edges $e(\cG)$ and vertices $v(\cG)$ satisfy $e(\cG)=2v(\cG)$. Due to the matrix structure, the cyclic ordering of the edges attached to a given vertex is important. In other words, the Feynman graphs are embedded on orientable surfaces and we associate to them a genus $g(\cG)\in\mathbb{N}$. The edges and the vertices also have a convenient stranded graph representation, in which the strands associated with the $\uN$ and $\oD$ indices are explicitly displayed, see Fig.\ \ref{fig:EdgeVertex}, right inset. The $\uN$ strands form ribbons and are oriented, because the fundamental representation of $\uN$ is complex. The edges of the Feynman graphs are given by a sum of two stranded contributions, unbroken and broken, associated with the two terms in \eqref{Model2FreeProp}. This implies that to a given Feynman graph $\cG$ is associated a set $\mathcal S(\cG)$ of $2^{e(\cG)}$ stranded graphs. We call \emph{primary graph} of $\cG$, noted $G_{*}$, its unique stranded configuration with only unbroken edges, and \emph{descendant graph} any other stranded configuration of $\cG$ obtained from $G_{*}$ by cutting ribbon edges; see Fig.\ \ref{fig:example_descendents} for an example. Finally, to any stranded graph $G$ is associated the ribbon graph $\hat G$, obtained from $G$ by removing the $\oD$ strands.

\begin{figure}
\centerline{\includegraphics[scale=.5]{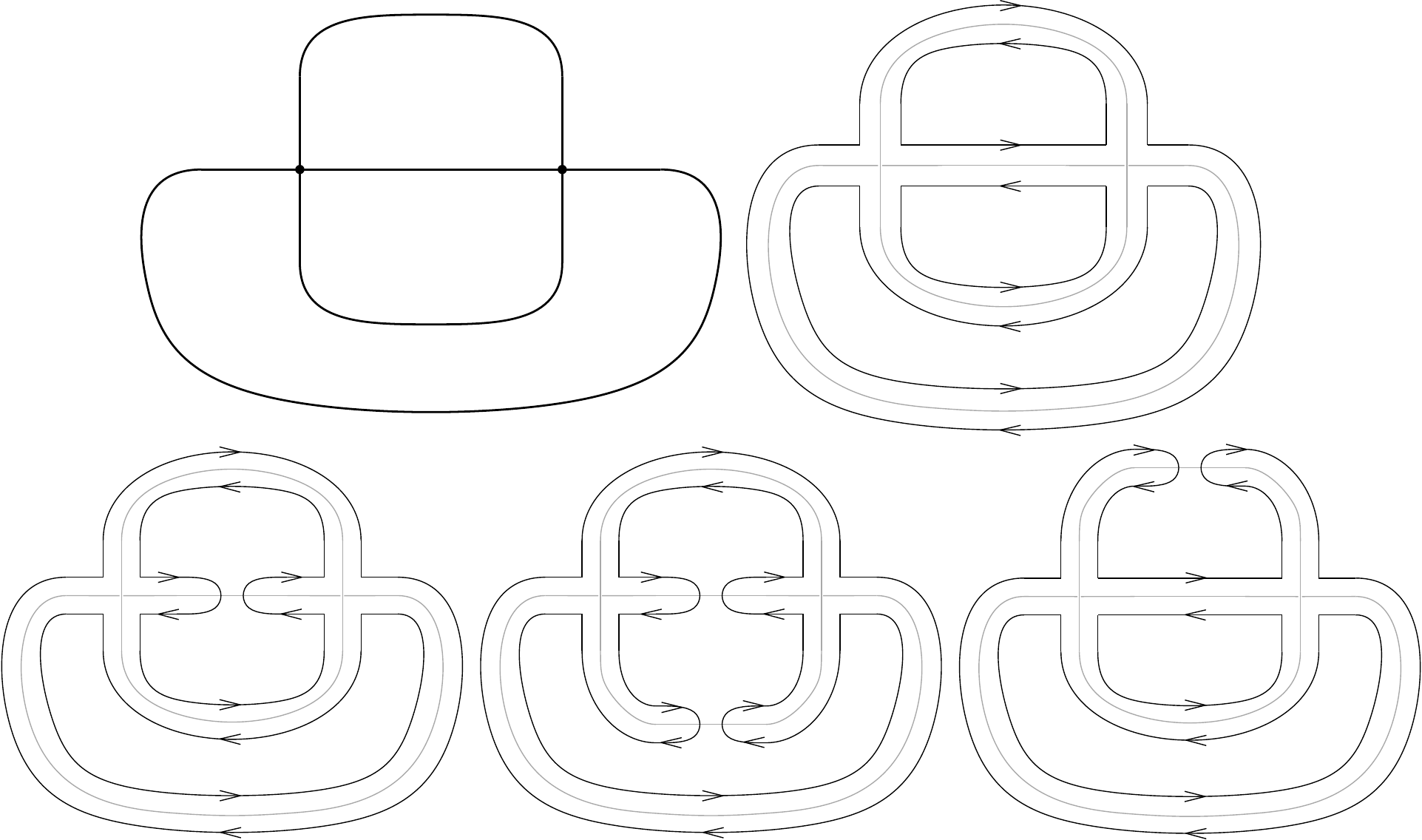}}
    \caption{Top: a Feynman graph and its primary stranded configuration; bottom: three of the $2^4 - 1 = 15$ descendants.\label{fig:example_descendents}}
\end{figure}

We can make the following useful remarks and definitions.
\begin{itemize}
\item The non-stranded representation of the Feynman graphs contains all the relevant information to construct all the associated stranded graphs: the $\uN$ ribbon structure is completely fixed by the cyclic ordering around of the edges attached to each vertex and the $\oD$ strands must go straight-ahead through the vertices.

\item The number of vertices of a stranded graph $G$ associated to $\cG$ is $v(G)=v(\cG)$. $G$ has $b(G)$ broken edges and $u(G)$ unbroken edges, with $b(G)+u(G)=e(G)=e(\cG)$. The ribbon graph $\hat G$ has $v_{4}(\hat G)=v(\cG)$ tetravalent vertices, $v_{1}(\hat G)=2b(G)$ univalent vertices (inserted at the end of cut ribbons), $e(\hat G)=u(G)+2b(G)$ edges and $c(\hat G)$ connected components. Note the relation $2 e(\hat G) = v_{1}(\hat G) + 4 v_{4}(\hat G)$.

\item We call \emph{faces} the closed cycles of $\uN$ strands and simply \emph{cycles} the closed cycles of $\oD$ strands. Each stranded graph $G$ has a certain number $f(G)$ of faces and $\varphi(G)$ of cycles. Its genus $g(G) := g(\hat G)$ is given by Euler's formula, 
\beq\label{EulerGhat} 2c(\hat G) - 2 g(G) = f(G)- e(\hat{G}) +( v_1(\hat{G})+ v_4(\hat{G})) = f(G)-u(G)+v(G)\, .\eeq
The genus of the Feynman graph is the genus of its primary stranded graph: $g(\cG)=g(G_{*})$. Note that cutting ribbon edges can only decrease the genus, so $g(G)\leq g(\cG)$. Note also that $\varphi(G)$ is the same for all stranded graphs and thus we can define $\varphi(\cG) := \varphi(G)$ for any $G\in\mathcal S(\cG)$.

\item A \emph{Feynman $2$-point graph} is a vacuum Feynman graph in which an edge has been cut into two external half-edges. An \emph{elementary tadpole $2$-point graph} is a Feynman $2$-point graph with only one vertex; and an \emph{elementary melon $2$-point graph} is a Feynman $2$-point graph with exactly two vertices connected to each other by three edges. The elementary tadpole and melon 2-point graphs are depicted in Fig.\ \ref{fig:tadpolemelon}.
\end{itemize}
\begin{figure}
\centerline{\includegraphics[scale=1]{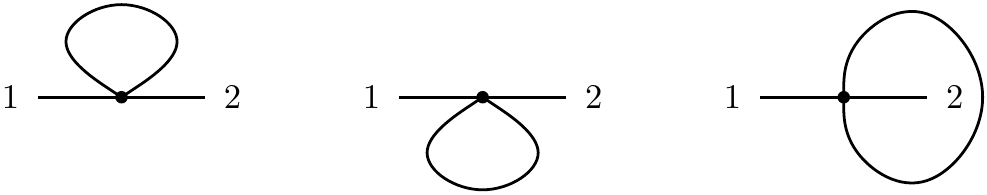}}
\vskip 1cm
\centerline{\includegraphics[scale=1]{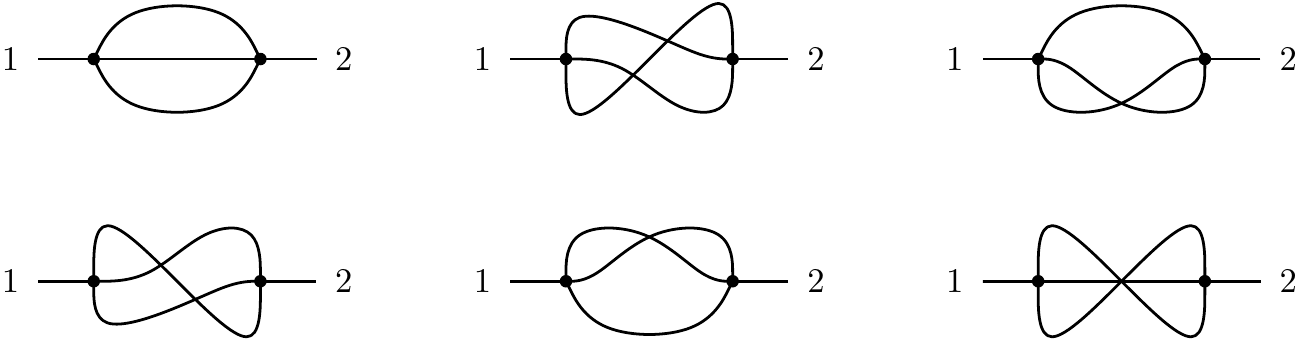}}
\caption{Elementary tadpole and melon 2-point graphs.\label{fig:tadpolemelon}}
\end{figure}
%

%%%%%%%%%%

\subsubsection{Amplitudes}

The amplitude associated to a Feynman graph $\cG$ is given by a sum over the $2^{e(\cG)}$ stranded configurations of $\cG$,\footnote{Overall constant factors, such as symmetry factors associated to $\cG$, are irrelevant for our discussion; we thus discard them. On the other hand, the relative signs and constant factors of the various stranded configurations $G$ of $\cG$ are crucial.}
\beq \label{Model2AmplitudeFeynman}
\mathcal{A}(\cG) = \sum_{G \in \cS(\cG)} \mathcal{A}(G)\, .\eeq
The $N$- and $D$-dependence of individual stranded graphs is given by a simple monomial term and reads
\beq \label{Model2AmplitudeStranded}
\mathcal{A}(G) =  
\lambda^{v(\cG)} (-1)^{b(G)} N^{2-2h(G)}D^{1+\alpha(\cG)}\, ,
\eeq
where we have introduced the parameters
\beq \label{Model2hExp1}
h(G) := 1+\frac{1}{2}\bigl(v(G)-f(G)+ b(G)\bigr) \, ,
\eeq
and
\beq \label{Model2alphaExp1}
\alpha(\cG) := -\frac{1}{2}v(\cG)+\varphi(\cG) - 1 \,.
\eeq
Using $u+b=2v$ and \eqref{EulerGhat}, $h$ may be rewritten as
\beq \label{Model2hExp2}
h (G) =g(G)+1 - c(\hat G) + b(G) \, .\eeq
This shows that $h$ is an integer and, since $c(\hat G)\leq b(G)+1$ (we create at most one new connected component each time we cut a ribbon edge), $h(G)\geq g(G)$. For the primary graph, $h$ coincides with the genus, but this is not true in general. The rule is that, if one cuts a \emph{regular} ribbon edge (a ribbon edge for which the two $\uN$ strands belong to two distinct faces), one does not change the genus nor the number of connected components and thus increases $h$ by one unit; and if one cuts a \emph{singular} ribbon edge (a ribbon edge for which the two $\uN$ strands belong to the same face), one either decreases the genus by one unit or increases $c$ by one unit, thus keeping $h$ fixed. Overall,
\beq\label{hgineq} h(G)\geq h(G_{*})=g(\cG)\, .\eeq

Noting that the powers of $D$ and $\lambda$ in \eqref{Model2AmplitudeStranded} are the same for all stranded graphs associated with $\cG$, the above discussion shows that we can write $\mathcal A(\cG) = \lambda^{v(\cG)}N^{2}D^{1+\alpha(\cG)}\tilde P_{\cG}(N^{-2})$ for a certain polynomial $\tilde P_{\cG}$. Since the tracelessness condition implies that all amplitudes must vanish when $N=1$ (in particular, the propagator \eqref{Model2FreeProp} identically vanish when $N=1$), $\tilde P_{\cG}(1)=0$ and it is convenient to factor out explicitly $N^{2}-1$:
\beq \label{modelAmplitudeFeynman2}
\mathcal{A}(\cG) = \lambda^{v(\cG)} (N^2 -1) D^{1+\alpha(\cG)} P_{\cG}(N^{-2}) \, .
\eeq
From \eqref{hgineq}, we know that the valuation of the polynomial $P_{\cG}$, i.e.\ the degree of its lowest-order non-zero monomial, must satisfy
\beq\label{valineq} \val P_{\cG}\geq g(\cG)\, .\eeq
Crucially, as we shall explain below, this inequality may become strict due to cancellations between the contributions of the primary and descendant graphs.

It is convenient to introduce the notions of \emph{grade} $\ell(G)$ and \emph{degree} $\omega(G)$ of a stranded graph $G$, defined by
\begin{align}
\label{Model2ellExp1}&\ell(G):=4 + 2 v(G) + b(G) - f(G)- 2 \varphi(G)=2h(G)-2\alpha(\cG)\, ,\\
\label{Model2omegaExp1}&\omega(G):=3 + \frac{3}{2}v(G) + b(G) -f(G)-\varphi(G)= 2h(G)-\alpha(\cG)=h(G)+\frac{\ell(G)}{2}\,\cdotp
\end{align}
With these definitions, the power of $D$ in a stranded graph amplitude reads
\beq \label{Model2Poweromega}
\mathcal{A}(G) \propto D^{1+2h(G)-\omega(G)} =D^{1+h(G)-\frac{\ell(G)}{2}} \, .\eeq
The grade is a natural notion in view of the expansion \eqref{largeDexp} and the conjecture \eqref{conjecture}. The degree generalizes the notion of degree in \eqref{Model1ExpLargeN} or index in \cite{FRV} to the case of stranded graphs with no colored graph counterpart.

\subsection{Existence of the large $D$ expansion}

\subsubsection{Pitfalls}

Let us start by clarifying two potentially confusing points. 

The simplest scenario that would make the large $D$ limit well-defined is to have a lower bound on the grade\footnote{One could make an equivalent discussion using the degree.} of primary stranded graphs of fixed genus, of the form $\ell(G_{*})\geq 2g -2\tilde\eta(g)$ for some non-decreasing function $\tilde\eta$. The simplest possibility, suggested by the colored graph models of Section \ref{sec:complex}, is $\tilde\eta=g$. Since the contribution of descendants can only be at equal or lower order in the $1/N^{2}$ expansion, the maximum power of $D$ in any stranded graph contributing at order $N^{2-2h}$ would then be $D^{1+\tilde\eta(h)}$. This upper bound on the power of $D$, being valid term by term in the sum \eqref{Model2AmplitudeFeynman}, would automatically work for the full Feynman amplitude $\mathcal A(\cG)$ and would ensure the existence of the large $D$ limit. The limit would actually be consistent even without imposing the tracelessness condition on the matrices. However, this simple scenario works only at leading order $h=0$ \cite{Frank} but breaks down at higher order. Already at genus one, there exist stranded graphs with arbitrarily negative degree and grade, such as the chains of non-planar tadpoles shown in Figure \ref{fig:tadpole_chain}. \emph{Clearly, the large $D$ limit can work only if non-trivial cancellations occur in the sum \eqref{Model2AmplitudeFeynman}.}

\begin{figure}
    \centering
    \includegraphics[scale=.8]{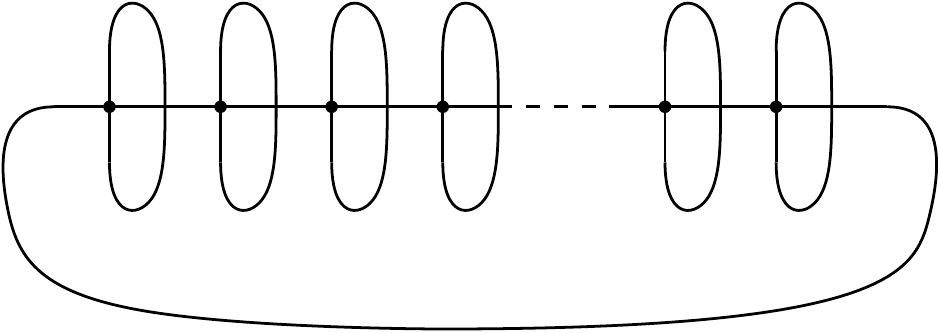}
    \caption{Chain of non-planar tadpoles with $v$ vertices associated to a genus one primary stranded graph: it has arbitrarily negative grade $\ell=2-v$ and degree $\omega=2-\frac{1}{2}v$.}
    \label{fig:tadpole_chain}
\end{figure}

Another possible expectation is that the tracelessness condition removes the whole contribution of a Feynman graph whose primary stranded graph contains a singular ribbon edge; but this is incorrect. Indeed, superficially, singular edges propagate the trace mode of the matrices, since the indices associated with their $\uN$ strands are identified. However, it is easy to check that the chain of non-planar tadpoles depicted in Fig.\ \ref{fig:tadpole_chain} is such that its primary stranded graph does contain singular edges; however, it has non-zero amplitude (coming from non-cancelled descendants). Thus, we do have to deal with graphs containing singular edges.

\subsubsection{\label{sec:lowbchi}Lower bound on $\eta(h)$}

We now derive a lower bound on the function $\eta(h)$. To do so, we prove the following proposition.

\begin{proposition} There exist Feynman graphs $\cG$ whose amplitude scales as $N^{2-2h}D^{1+h}$ at large $N$ and $D$, for all $h\in\mathbb N$.
\end{proposition}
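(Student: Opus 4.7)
The plan is to construct, for each integer $h\geq 0$, an explicit Feynman graph $\cG_h$ whose amplitude $\mathcal A(\cG_h)$ contains a non-vanishing contribution of order $N^{2-2h}D^{1+h}$. The base case $h=0$ is already established in \cite{Frank}: an elementary melonic vacuum graph, for instance two tetrahedral vertices joined by four parallel edges in the melonic pattern, has amplitude of order $N^{2}D$ and saturates the bound.

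For $h\geq 1$ I would proceed by induction, building $\cG_h$ from $\cG_{h-1}$ by attaching a local \emph{non-planar handle}: a small subgraph replacing a chosen edge, made of two new tetrahedral vertices connected in such a way that the invariants of the primary stranded graph shift by $(\Delta v, \Delta \varphi, \Delta g)=(2,2,1)$. Plugging these shifts into $h(G_*) = g(\cG) = 1 + \tfrac12(v-f)$ (equivalently \eqref{Model2hExp1} with $b=0$) and into \eqref{Model2alphaExp1}, one finds $\Delta h(G_*)= +1$ and $\Delta\alpha = +1$. After $h$ iterations, this yields $g(\cG_h)=h$ and $1+\alpha(\cG_h)=1+h$, which by \eqref{Model2AmplitudeStranded} is exactly the desired scaling for the primary stranded configuration.

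The remaining point is to verify that this leading contribution does not cancel in the sum over stranded configurations \eqref{Model2AmplitudeFeynman}. Since $\alpha(\cG)$ is common to every descendant of a fixed Feynman graph, the only possible source of cancellation lies at the same $1/N^{2}$ order. By the discussion following \eqref{Model2hExp2}, a descendant with $h(G)=h(G_*)$ can arise only from cutting \emph{singular} ribbon edges, whereas cutting any \emph{regular} edge strictly raises $h$. My strategy is therefore to design the handle so that the primary stranded graph of $\cG_h$ contains no singular edges; then every non-trivial descendant has $h(G)>h$, contributes at strictly subleading order in $1/N^{2}$, and the primary term $\lambda^{v(\cG_h)}N^{2-2h}D^{1+h}$ stands alone at the target order with manifestly non-zero coefficient (overall symmetry factors being irrelevant here, per the footnote after \eqref{Model2AmplitudeFeynman}).

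The main obstacle is precisely the explicit design of such a handle: it must simultaneously (i) raise the ribbon genus by one, (ii) produce two additional $\oD$-cycles from the tetrahedral $D$-strand pairings, and (iii) preserve the regularity of every edge of the primary stranded graph under iterated attachment. Once one exhibits an explicit handle satisfying (i)--(iii) at the level of stranded pictures, and checks that repeated attachments along sufficiently "independent" edges do not spoil (iii), the rest of the argument is just bookkeeping via the identities \eqref{Model2hExp1}--\eqref{Model2alphaExp1}.
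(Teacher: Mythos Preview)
Your strategy is exactly the one the paper uses: build the family inductively via a local move that raises both the genus and $\alpha$ by one unit, and then observe that if the primary stranded graph has no singular ribbon edges, every descendant has $h(G)>g(\cG)$, so the primary contribution at order $N^{2-2h}D^{1+h}$ cannot be cancelled in the sum \eqref{Model2AmplitudeFeynman}. The paper starts from the ring graph rather than a melon, but this is immaterial.

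Where you leave a gap, the paper simply quotes an explicit move. It is a \emph{four}-vertex insertion (from \cite{Azeyanagi:2017mre}, Fig.~\ref{fig:moveKg}) with $(\Delta v,\Delta f,\Delta\varphi,\Delta g)=(4,2,3,1)$, hence $\Delta\alpha=-2+3=1$, and one checks by inspection that the resulting primary stranded graphs $\mathcal K_g$ contain no singular edges. Your proposed two-vertex handle with $(\Delta v,\Delta\varphi,\Delta g)=(2,2,1)$ is more fragile than it looks: replacing a single edge by a two-vertex $2$-point subgraph yields five $\oD$-strand segments and four straight-ahead passages, with one path forced to connect the two external legs; getting two \emph{new} closed cycles out of the remaining two passages and two internal edges forces those internal edges to be self-contractions at opposite corners of a vertex, i.e.\ elementary tadpoles. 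Such tadpoles make your regularity condition (iii) doubtful. So the two-vertex version is not obviously realizable, and the clean way to close the argument is to allow yourself four vertices per step, as the paper does.
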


Barring cancellations between distinct Feynman graphs, this implies that $\eta(h)\geq h$.

\begin{figure}
    \centering
    \includegraphics{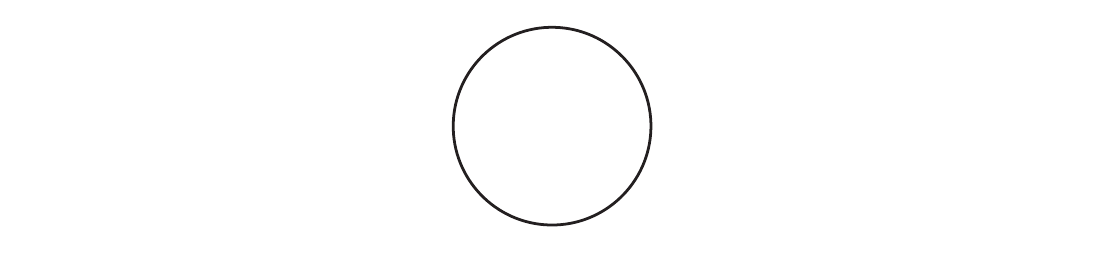}
    \caption{The ring Feynman graph.}
    \label{fig:ring}
   
\end{figure}

\begin{figure}
    \centering
    \includegraphics[scale=.8]{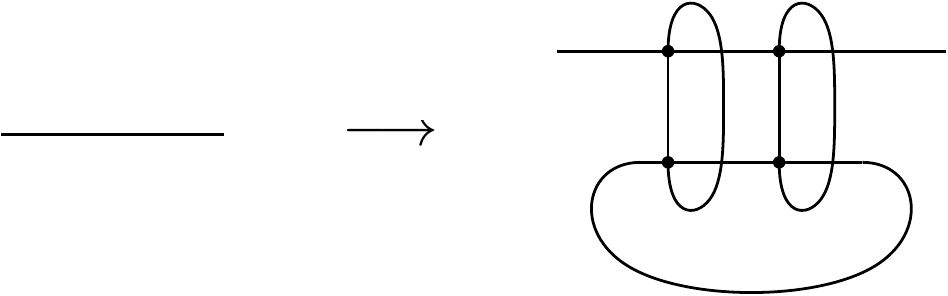}
    \caption{Move adding four vertices, two faces and three cycles to the primary graph and thus increasing the genus by one unit while preserving the grade.}
    \label{fig:moveKg}
\end{figure}

\begin{proof} Starting from the $v=0$ ring Feynman graph, see Fig.\ \ref{fig:ring}, we construct a Feynman graph $\mathcal K_{g}$, of any genus $g$, by performing $g$ times the move introduced in \cite{Azeyanagi:2017mre} and depicted in Fig.\ \ref{fig:moveKg}. It is straightforward to check that for any $g$, the associated primary stranded graph has no singular ribbon edge. Following the discussion below Eq.\ \eqref{Model2hExp2}, we find that $h(G)>g$ for any descendant $G$. The contribution of the primary graph in the amplitude $\mathcal A(\mathcal K_{g})$, which scales as $N^{2-2g}$, thus cannot be cancelled by contributions from descendants in the sum \eqref{Model2AmplitudeFeynman}. We conclude by noting that, by construction, $\alpha(\mathcal K_{g})=g$.\end{proof}

\subsubsection{Conjecture and proposition}

Our main goal is to prove that $\eta(h)$, defined in \eqref{chidef}, is finite, and in fact grows at most linearly with $h$. We do not expect that cancellations between distinct Feynman graphs could occur and help us prove the result. We thus seek a bound valid for each Feynman graph individually.

\begin{conjecture}\label{conj11} Let $\cG$ be a connected vacuum Feynman graph. Then
\beq\label{co1} \alpha(\cG)\leq \mathrm{val} \, P_{\cG}\, .\eeq
\end{conjecture}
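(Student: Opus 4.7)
The plan is to recast the conjecture as a vanishing statement for the polynomial $Q_{\cG}(x) := \sum_{G \in \cS(\cG)}(-1)^{b(G)} x^{h(G)}$, with $x = N^{-2}$. Combining \eqref{Model2AmplitudeStranded} and \eqref{modelAmplitudeFeynman2} gives $N^{2} Q_{\cG}(x) = (N^{2}-1)\, P_{\cG}(x)$, so $Q_{\cG}(x) = (1-x) P_{\cG}(x)$ and, since $(1-x)$ is a unit at $x=0$, $\val P_{\cG} = \val Q_{\cG}$. Parametrizing the $2^{e(\cG)}$ elements of $\cS(\cG)$ by the subset $B \subseteq E(\cG)$ of broken edges, the target inequality becomes
\beq
\sum_{B \subseteq E(\cG)} (-1)^{|B|} x^{h(G_{B})} = O\bigl(x^{\alpha(\cG)}\bigr)\, .
\eeq
The main combinatorial lever is the observation made below \eqref{Model2hExp2}: cutting a \emph{regular} ribbon edge shifts $h$ by $+1$, whereas cutting a \emph{singular} ribbon edge leaves $h$ unchanged. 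Hence any two configurations differing only in the cut-status of a single singular edge contribute opposite signs at the same power of $x$ and cancel in $Q_{\cG}$.

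I would implement this cancellation systematically through an involution $\sigma$ on $2^{E(\cG)}$. Fix a total order on $E(\cG)$ and define $\sigma(B) = B \triangle \{e(B)\}$, where $e(B)$ is the smallest edge that is singular in $G_B$ if $e \notin B$, or singular in $G_{B\setminus\{e\}}$ if $e \in B$. The non-fixed points pair up into exactly cancelling contributions, and the surviving fixed points are the \emph{admissible} configurations, in which every edge is regular both when present and when removed. The problem then reduces to proving $h(G_B) \geq \alpha(\cG)$ for every admissible $G_B$. For such configurations, Euler's identity \eqref{EulerGhat} tightens because regular ribbon edges always separate distinct faces, so $f(G_B)$ is constrained combinatorially; meanwhile, $\varphi(\cG)$ — which enters $\alpha(\cG)$ and is the same for all $G \in \cS(\cG)$ — is pinned by the straight-ahead rule for $\oD$ strands through quadrivalent vertices. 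The admissible bound would then be obtained by induction on $v(\cG)$, contracting admissible melonic or bridge substructures while tracking the joint shift in $(\alpha, h)$, and reducing to the planar base case already controlled in \cite{Frank}.

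The hardest step will be the well-definedness of the involution. Singularity of an edge $e$ in $G_B$ depends on the global face structure of $G_B$, so toggling $e$ may promote previously regular edges to singular ones (or conversely), and the ``smallest singular edge'' rule need not satisfy $e(\sigma(B)) = e(B)$. Overcoming this likely requires replacing the global criterion by a \emph{local} one — declaring $e$ singular only when its two $\uN$ strands close up within a fixed small neighborhood of $e$ that is insensitive to toggles at distant edges — or clustering edges into blocks that are flipped jointly so that the canonical choice is preserved. Once a bona fide pairing is constructed, the admissible bound should be attackable by adapting the tetrahedral combinatorics of \cite{Benedetti:2017qxl}; short of achieving this well-definedness, one would be forced to settle for the weaker upper bound $\eta(h) \leq 2h$ that the paper establishes in the following sections.
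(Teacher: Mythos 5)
The statement you are trying to prove is labelled a \emph{conjecture} in the paper, and it is in fact left open there: the authors only establish the weaker bound $\alpha(\cG)\leq 2\,\val P_{\cG}$ (Proposition \ref{weakprop}), by first reducing to 2PI graphs via the tracelessness/Schur argument of Lemma \ref{lemmaSchur} and then invoking the degree bound of \cite{Benedetti:2017qxl}, and they reduce the sharp inequality \eqref{co1} to the purely graph-theoretic Conjecture \ref{conj2}, which they state they were unable to prove. So your proposal should be judged as a proof strategy for an open problem, and as such it does not close the gap. Your reformulation is correct as far as it goes: $N^{2}Q_{\cG}(x)=(N^{2}-1)P_{\cG}(x)$ does give $\val Q_{\cG}=\val P_{\cG}$, and two stranded configurations differing by the cut-status of a single singular edge do contribute $\pm x^{h}$ with the same $h$ and opposite signs. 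But the two load-bearing steps are both missing. First, as you yourself concede, the ``toggle the smallest singular edge'' map is not an involution, because singularity is a global property of the face structure of $G_{B}$: toggling $e$ can change which other edges are singular, so $e(\sigma(B))\neq e(B)$ in general. Your proposed repairs are not viable as stated: a \emph{local} singularity criterion destroys the exact cancellation (the sign-reversing pairing only cancels when the two $\uN$ strands genuinely close into the same face, which is a global fact), and ``clustering edges into blocks'' is not specified in any way that guarantees $\sigma^{2}=\mathrm{id}$.

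Second, even granting a bona fide involution, the residual claim --- that every surviving ``admissible'' configuration satisfies $h(G_{B})\geq\alpha(\cG)$ --- is essentially the entire difficulty of the conjecture repackaged, and your sketch of it (``Euler's identity tightens'', ``induction on $v(\cG)$ contracting melonic or bridge substructures, reducing to the planar base case of \cite{Frank}'') contains no argument. Note in particular that the result of \cite{Frank} controls only $h=0$; the dangerous graphs (e.g.\ the chains of non-planar tadpoles of Fig.\ \ref{fig:tadpole_chain}) have primary configurations of genus one with arbitrarily large $\alpha$, and the cancellations that save them in the paper are organized not edge-by-edge but through the 2PI decomposition and the projector identity $\bP\cdot\bP=\bP$, which crucially uses tracelessness. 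Any complete argument along your lines would at minimum have to reproduce that mechanism and then still handle 2PI graphs, where the paper itself gets stuck (on $\oD$ triangles). As it stands, your proposal is an interesting reorganization of the required cancellations, but it proves neither \eqref{co1} nor even the weaker $\kappa=2$ bound, which the paper obtains by an entirely different route.
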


\begin{corollary} The inequality \eqref{co1} implies \eqref{conjecture}, $\eta(h)=h$.
\end{corollary}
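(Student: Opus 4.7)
The plan is to chase the consequence of the bound $\alpha(\cG)\leq \val P_{\cG}$ through the amplitude formula \eqref{modelAmplitudeFeynman2} and match it against the definitions \eqref{chidef} and \eqref{perturbexp}. First, I would expand the polynomial and the prefactor $(N^{2}-1)$, writing
\beq
\mathcal{A}(\cG) = \lambda^{v(\cG)} D^{1+\alpha(\cG)} \sum_{j\geq 0} c_{j}^{\cG}\, N^{2-2j}\, ,
\eeq
where, in terms of the coefficients $p_{k}^{\cG}$ of $P_{\cG}(N^{-2})=\sum_{k\geq 0}p_{k}^{\cG}N^{-2k}$, one has $c_{0}^{\cG}=p_{0}^{\cG}$ and $c_{j}^{\cG}=p_{j}^{\cG}-p_{j-1}^{\cG}$ for $j\geq 1$. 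Note that each $\mathcal{A}(\cG)$ is a polynomial in $1/N^{2}$ times $D^{1+\alpha(\cG)}$, so the expansion makes perfect sense.

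Next, I would read off the key consequence: since $p_{k}^{\cG}=0$ for $k<\val P_{\cG}$, a coefficient $c_{h}^{\cG}$ can be non-zero only if $h\geq \val P_{\cG}$ (this uses both terms in $c_{h}^{\cG}=p_{h}^{\cG}-p_{h-1}^{\cG}$). Conjecture \ref{conj11} then gives $\alpha(\cG)\leq\val P_{\cG}\leq h$ whenever $\cG$ contributes to the order $N^{2-2h}$ piece of its amplitude. Hence every Feynman graph $\cG$ with $v(\cG)=k$ vertices whose order $N^{2-2h}$ contribution is non-vanishing yields a term of order at most $D^{1+h}$.

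Summing the contributions of the (finitely many) Feynman graphs with $k$ vertices, I conclude that $F_{h}^{(k)}(D)=O(D^{1+h})$ at large $D$, so $\eta_{k}(h)\leq h$ for every $k$, and therefore $\eta(h)=\sup_{k}\eta_{k}(h)\leq h$. Combined with the matching lower bound $\eta(h)\geq h$ established in the previous proposition (the graphs $\mathcal{K}_{g}$ built by iterating the move of Fig.\ \ref{fig:moveKg} realize amplitudes of order $N^{2-2h}D^{1+h}$, with no possible cancellation against descendants since $h(G)>g$ for all descendants), the conclusion $\eta(h)=h$ follows immediately.

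The proof is thus essentially bookkeeping once \eqref{co1} is granted, and I do not expect any real obstacle: the only point to watch carefully is the interplay between the two terms of $(N^{2}-1)$ in the $1/N^{2}$ expansion, which is why one needs $h\geq \val P_{\cG}$ (and not $h\geq \val P_{\cG}+1$ or the opposite) as the threshold controlling when a graph can contribute.
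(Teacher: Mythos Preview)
Your argument is correct and follows essentially the same approach as the paper: from \eqref{modelAmplitudeFeynman2} one sees that $\mathcal{A}(\cG)$ scales as $N^{2-2\val P_{\cG}}$ at large $N$, so any contribution at order $N^{2-2h}$ has $h\geq\val P_{\cG}\geq\alpha(\cG)$ under \eqref{co1}, yielding $\eta(h)\leq h$, and the lower bound from Section~\ref{sec:lowbchi} gives the equality. The paper's own proof is three sentences conveying exactly this; your version simply makes the $(N^{2}-1)$ bookkeeping and the passage through $\eta_{k}(h)$ explicit.
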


\begin{proof} Eq.\ \ref{modelAmplitudeFeynman2} shows that the amplitude $\mathcal A(\cG)$ scales as $N^{2-2\val P_{\cG}}$ at large $N$. The inequality \eqref{co1} then implies that $\eta(h)\leq h$. But we know that $\eta(h)\geq h$ from Section \ref{sec:lowbchi}.
\end{proof}

We shall be able to prove a weaker form of the inequality.

\begin{proposition}\label{weakprop} Let $\cG$ be a connected vacuum Feynman graph. Then
\beq\label{co2} \alpha(\cG)\leq 2\, \mathrm{val} \, P_{\cG}\, .\eeq
\end{proposition}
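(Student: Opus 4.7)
The plan is to prove, for every connected vacuum Feynman graph $\cG$, the equivalent statement that $\omega(G) \geq 0$ for the lowest-$h$ stranded configuration $G \in \cS(\cG)$ contributing non-trivially to \eqref{Model2AmplitudeFeynman}. The equivalence follows from $\omega(G) = 2h(G) - \alpha(\cG)$ together with $\val P_\cG = \val \tilde P_\cG$, where $\tilde P_\cG(x) = \sum_G (-1)^{b(G)} x^{h(G)}$: the inequality $\alpha(\cG) \leq 2 \val P_\cG$ demands that, at every level $h < \alpha(\cG)/2$, the signed sum $\sum_{G : h(G)=h} (-1)^{b(G)}$ vanishes. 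As Fig.~\ref{fig:tadpole_chain} shows, individual primary stranded graphs can have arbitrarily negative $\omega$, so cancellations induced by the subtraction term in the traceless propagator \eqref{Model2FreeProp} are essential.

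The first step is to exploit the elementary cancellation mechanism provided by singular ribbon edges. As noted below \eqref{hgineq}, cutting a singular ribbon edge preserves $h(G)$ while flipping $(-1)^{b(G)}$. Configurations containing singular edges therefore come in sign-flipped pairs at fixed $h$, and whenever the remaining invariants of the paired configurations coincide an exact cancellation occurs. I would formalise this by identifying, for each $G$ with $\omega(G) < 0$, a sibling $G' \in \cS(\cG)$ with $h(G') = h(G)$ and $\mathcal A(G') = -\mathcal A(G)$. The chain of non-planar tadpoles of Fig.~\ref{fig:tadpole_chain} and its generalisations always possess singular edges in their primary configurations, which is what allows their pathological negative-$\omega$ contribution to be eliminated in this way.

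For stranded graphs that survive this cancellation, I would appeal to the combinatorial bound of \cite{Benedetti:2017qxl}. Their result, established for rank-$3$ tensor models with tetrahedral interaction and tracelessness constraint, yields a non-negativity bound on the appropriate Gurau degree. The translation to our setting proceeds by viewing $X^a_{\mu\, b}$ as a rank-$3$ tensor whose three indices transform under $\uN$, $\uN$ and $\oD$, with the tetrahedral interaction matching theirs and $\uN$-tracelessness playing the role of their tensor tracelessness. Their degree controls a combination of face and cycle counts; translating it to our $\alpha$ and $h$ entails comparing the number of $\oD$-cycles $\varphi$ to the number of $\uN$-faces $f$, and the natural bound one obtains produces the factor-of-$2$ loss that separates Proposition~\ref{weakprop} from the sharper Conjecture~\ref{conj11}.

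The main obstacle lies in this last translation. The analysis of \cite{Benedetti:2017qxl} is carried out for symmetric tensors, in which the three indices play identical combinatorial roles; in our model the two $\uN$-indices form a ribbon while the $\oD$-index forms independent cycles, and this asymmetry obstructs a sharp matching of the two degrees. Identifying which combinatorial identities of \cite{Benedetti:2017qxl} survive the desymmetrisation, and combining them with the singular-edge cancellations of the first step so as to conclude $\omega(G) \geq 0$ for each lowest-$h$ surviving stranded graph, is the delicate technical core of the proof; a tighter matching would presumably yield Conjecture~\ref{conj11}, but it is not clear how to achieve it with the present techniques.
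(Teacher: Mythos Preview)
Your proposal has a genuine structural gap in both steps.

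For the first step, the singular-edge pairing you describe is not a well-defined involution. A stranded graph $G$ may have several singular ribbon edges, and you do not specify which one to cut; moreover, after cutting, the set of singular edges in the resulting graph changes, so ``cut one singular edge'' does not square to the identity. Without an involution you cannot conclude that the signed contributions cancel. The paper explicitly warns (in the ``Pitfalls'' discussion) that the presence of singular edges does \emph{not} force the full Feynman amplitude to vanish: the chain of tadpoles has singular edges in its primary configuration yet has non-zero amplitude. So the mechanism you propose cannot by itself remove all negative-$\omega$ contributions.

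The paper's route is different and avoids any graph-by-graph pairing. The cancellations are encoded once and for all in Lemma~\ref{lemmaSchur}: every Feynman two-point amplitude is proportional to the traceless projector $\bP$. This immediately gives the factorisation $\alpha(\cG)=\alpha(\cH_1)+\alpha(\cH_2)$ and $P_\cG = P_{\cH_1}P_{\cH_2}$ whenever $\cG$ is two-particle reducible, reducing the problem to 2PI graphs (Proposition~\ref{prop2PI} and Corollary~\ref{corr2PI}). For a 2PI graph one only needs to bound the \emph{primary} stranded graph $G_*$, since $\val P_\cG \geq g(\cG)=h(G_*)$; no descendants enter at all.

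For the second step, your worry about ``translating'' the result of \cite{Benedetti:2017qxl} from symmetric tensors is misplaced: the bound is already available in exactly the required form for the stranded graphs of this model, as Proposition~\ref{propo1}, with hypothesis ``no elementary melon or tadpole two-point subgraph.'' The point you are missing is that this hypothesis is automatic for a 2PI primary graph with $v\geq 3$, and the finitely many cases $v\leq 2$ are checked by hand. That is the whole proof; no desymmetrisation or delicate matching is required, and the factor of two in $\alpha\leq 2\val P_\cG$ comes simply from $\omega(G_*)=2h(G_*)-\alpha\geq 0$ with $h(G_*)=g(\cG)\leq\val P_\cG$.
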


This inequality implies that $\eta(h)\leq 2h$ and is thus enough to prove the existence of the large $D$ limit.

\subsubsection{Reduction to the case of 2PI vacuum Feynman graphs}

To prove the inequalities \eqref{co1} or \eqref{co2}, we need to understand non-trivial cancellations between terms in the sum \eqref{Model2AmplitudeFeynman}. For instance, the chain of tadpoles $\cG_{v}$ with $v$ vertices (see Fig.\ \ref{fig:tadpole_chain}) has $\alpha = v/2$ and thus \eqref{co1} predicts that $\mathcal A(\cG_{v})$ scales at most as $N^{2-v}$ at large $N$, whereas the primary stranded graph is of genus one and thus scales as $N^{0}$ for all $v$ (actually, the chain of tadpoles scales as $N^{2-2v}$ at large $N$, see below).

Let us start by reducing the problem to the case of two-particle irreducible (2PI) Feynman graphs. This will allow in particular to fully understand the chain of tadpoles.

\begin{proposition}\label{prop2PI} Assume that 
\beq\label{ineq4}\alpha(\cG)\leq \kappa \, \mathrm{val} \, P_{\cG}\eeq
is valid for any 2PI Feynman graph $\cG$, where $\kappa$ is a positive constant (we are mainly interested in the cases $\kappa=1$ or $\kappa=2$). Then the same inequality is valid for any Feynman graph $\cG$.
\end{proposition}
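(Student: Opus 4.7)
The plan is to induct on the number of vertices $v(\cG)$. The base case is any 2PI graph $\cG$, for which \eqref{ineq4} holds by assumption; graphs with very few vertices (the ring and any single-vertex graph) are automatically 2PI. For the inductive step, assume $\cG$ is 2-particle reducible. Choose any 2-edge cut $\{e_{1},e_{2}\}$ whose removal disconnects $\cG$ into two 2-point subgraphs $\cG_{A}$ and $\cG_{B}$, each containing at least one vertex. Close each subgraph into a vacuum Feynman graph $\tilde\cG_{A}$, $\tilde\cG_{B}$ by joining its two external half-edges with a single new propagator edge. Both $\tilde\cG_{A}$ and $\tilde\cG_{B}$ are connected vacuum graphs with strictly fewer vertices than $\cG$, so the inductive hypothesis applies to them.

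The first step is the combinatorial identity
\beq
\alpha(\cG) = \alpha(\tilde\cG_{A}) + \alpha(\tilde\cG_{B})\, ,
\eeq
proven by directly counting $\oD$ cycles. Since each edge carries exactly one $\oD$ strand going straight through every vertex, the internal strands of $\cG_{A}$, viewed as a 2-point graph, form some number $\varphi_{A}^{\text{int}}$ of closed internal cycles together with a single open arc joining its two external half-edges, and similarly for $\cG_{B}$. In $\cG$, the two cut-edge strands close the two arcs into a single combined cycle, giving $\varphi(\cG)=\varphi_{A}^{\text{int}}+\varphi_{B}^{\text{int}}+1$. In $\tilde\cG_{A}$, the single closure edge joins the internal arc into one extra cycle, so $\varphi(\tilde\cG_{A})=\varphi_{A}^{\text{int}}+1$, and likewise for $\tilde\cG_{B}$. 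Combining with $v(\cG)=v(\tilde\cG_{A})+v(\tilde\cG_{B})$ and the definition \eqref{Model2alphaExp1} of $\alpha$ yields the identity.

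The second, more delicate step is the inequality
\beq
\val P_{\cG} \geq \val P_{\tilde\cG_{A}} + \val P_{\tilde\cG_{B}}\, .
\eeq
The key observation is that the traceless propagator $\bP$ in \eqref{Model2FreeProp} is an idempotent operator on matrices, with $\tr\bP=N^{2}-1$. The amplitude of $\cG$ can be written as a tensor contraction of the two 2-point amplitudes of $\cG_{A}$ and $\cG_{B}$ through the two $\bP$ propagators at the cut. Using $\uN$-invariance to decompose each 2-point amplitude onto the invariant tensor structures and exploiting $\bP^{2}=\bP$, one obtains an essentially multiplicative relation $\mathcal{A}(\cG)\sim\mathcal{A}(\tilde\cG_{A})\,\mathcal{A}(\tilde\cG_{B})/[(N^{2}-1)D]$ at the leading $N$ order. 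Matching the $N$-powers using the global form \eqref{modelAmplitudeFeynman2} for $\cG$, $\tilde\cG_{A}$ and $\tilde\cG_{B}$ then gives the claimed additivity of $\val P$. As a consistency check, cutting the chain of $v$ tadpoles at the two chain edges flanking any vertex decomposes it into a figure-8 ($\val P=1$) and a shorter chain of $v-1$ tadpoles ($\val P=v-1$), whose sum $v$ matches the valuation of the original chain.

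Combining the two relations with the inductive hypothesis applied to $\tilde\cG_{A}$ and $\tilde\cG_{B}$ gives
\beq
\alpha(\cG) = \alpha(\tilde\cG_{A}) + \alpha(\tilde\cG_{B}) \leq \kappa\bigl(\val P_{\tilde\cG_{A}} + \val P_{\tilde\cG_{B}}\bigr) \leq \kappa\, \val P_{\cG}\, ,
\eeq
closing the induction. The hard part will be the second step: the factorization of $\val P$ must survive the sum over all $2^{e(\cG)}$ stranded configurations of $\cG$, which mixes broken and unbroken choices at the cut in a delicate way. Additional care is also needed in degenerate cases where the two cut edges are parallel or both incident to the same vertex, but those can be handled by the same local analysis. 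I expect that reorganizing the sum using the operator identity $\bP^{2}=\bP$ at the cut, rather than enumerating broken/unbroken configurations, will make the factorization transparent.
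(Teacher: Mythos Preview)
Your overall strategy---induction on $v(\cG)$, decomposing a non-2PI graph at a 2-edge cut into two 2-point pieces, and establishing additivity of both $\alpha$ and $\val P$ under this decomposition---is exactly the route the paper takes. Your first step (additivity of $\alpha$ via direct counting of $\oD$ cycles) is correct and matches the paper.

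The gap is in your second step, which you yourself flag as incomplete. The paper closes it cleanly with a separate lemma (Lemma~\ref{lemmaSchur}): the amplitude of \emph{any} Feynman 2-point graph is exactly proportional to the tree-level propagator $\bP$. The proof is two lines: by $\uN\times\oD$ invariance the 2-point amplitude has the form $(a\,\delta^a_d\delta^c_b + b\,\delta^a_b\delta^c_d)\delta_{\mu\nu}$; but one external leg is attached to the rest of the graph through an internal $\bP$, so contracting the external indices $a$ with $b$ annihilates the amplitude and forces $a+Nb=0$. This is precisely where the tracelessness condition enters. Once each 2-point amplitude equals a scalar times $\bP$, the idempotent identity $\bP^{2}=\bP$ and $\tr\bP=(N^{2}-1)D$ give the \emph{exact} factorization $P_{\cG}=P_{\tilde\cG_{A}}P_{\tilde\cG_{B}}$, hence $\val P_{\cG}=\val P_{\tilde\cG_{A}}+\val P_{\tilde\cG_{B}}$ as an equality, not merely an inequality. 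Your worries about summing over broken/unbroken stranded configurations, about ``leading $N$ order'' approximations, and about degenerate cut geometries then all evaporate: the lemma is a statement about the full amplitude, already summed over all $2^{e(\cG)}$ stranded configurations, and is insensitive to how the two cut edges sit relative to each other.
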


\begin{proof} The proof uses a simple property of Feynman 2-point graphs.

\begin{lemma}\label{lemmaSchur} The amplitude associated to any Feynman 2-point graph is proportional to its tree-level value \eqref{Model2FreeProp}.
\end{lemma}

\begin{proof} A proof based on Schur's lemma can be given, following a similar reasoning as in Section 6, Lemma 9 of \cite{Benedetti:2017qxl}. We provide here an elementary and direct argument.

As in the case of vacuum Feynman graphs, the amplitude $\mathcal A(\tilde\cG)^{a}_{\mu b, }{}^{c}_{\nu d}$ associated to a Feynman 2-point graph $\tilde\cG$ is given by a sum over stranded graph amplitudes $\mathcal A(\tilde G_{\alpha})^{a}_{\mu b, }{}^{c}_{\nu d}$, where $\alpha$ labels the contributing stranded graphs. The strand structure, which is itself a consequence of the underlying $\uN\times\oD$ symmetry, implies that $\mathcal A(\tilde G_{\alpha})^{a}_{\mu b, }{}^{c}_{\nu d}$ is proportional to $(a_{\alpha}\delta^{a}_{d}\delta^{c}_{b}+ b_{\alpha}\delta^{a}_{b}\delta^{c}_{d})\delta_{\mu\nu}$. Summing over $\alpha$, we find that
\beq\label{AGtilde1} \mathcal A(\tilde\cG)^{a}_{\mu b, }{}^{c}_{\nu d}=\bigl(a\,\delta^{a}_{d}\delta^{c}_{b}+ b\,\delta^{a}_{b}\delta^{c}_{d}\bigr)\delta_{\mu\nu}\, ,\eeq
where $a$ and $b$ may depend on $N$, $D$ and the coupling constant $\lambda$. We now use the fact that the external strands, for instance the one associated with the indices $a$, $b$ and $\mu$, are attached to an internal vertex with the help of a propagator. This implies the following structure,
\beq\label{AGtilde2} \mathcal A(\tilde\cG)^{a}_{\mu b, }{}^{c}_{\nu d}  = 
\bP^{a}_{\mu b, }{}^{a'}_{\mu' b'} u^{\mu'b' c}_{a'\nu d}\, ,\eeq
where $u$ contains the contributions of all the other contractions in the graph. Contracting $a$ with $b$ and using $\bP^{a}_{\mu a, }{}^{a'}_{\mu' b'} =0$ implies $\mathcal A(\tilde\cG)^{a}_{\mu a, }{}^{c}_{\nu d}=0$. Comparing with \eqref{AGtilde1}, we find $a+N b=0$, which yields the desired result.
\end{proof}

\begin{figure}
    \centering
    \includegraphics[scale=1]{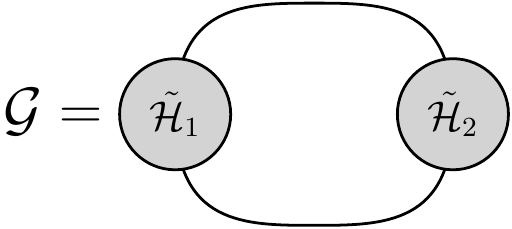}
    \caption{Structure of a connected Feynman graph $\cG$ with at least two vertices, which is not 2PI. $\tilde{\cH}_1$ and $\tilde{\cH}_2$ are connected 2-point Feynman graphs, each containing at least one vertex.}
    \label{fig:2PI}
\end{figure}

We now proceed by induction on the number of vertices in $\cG$. The amplitude for the ring Feynman graph $v=0$ is $(N^{2}-1)D$. In particular $\alpha=0$ and \eqref{ineq4} is satisfied. Note that the Feynman graphs with $v=1$ are 2PI. 

Consider now a connected Feynman graph $\cG$ with $v(\cG)\geq 2$ that is not 2PI. It must have the structure depicted in Fig.\ \ref{fig:2PI}. $\tilde{\cH}_1$ and $\tilde{\cH}_2$ are connected Feynman 2-point graphs, each containing at least one vertex. We denote by $\cH_1$ and $\cH_2$ the vacuum graphs obtained from $\tilde{\cH}_1$ and $\tilde{\cH}_2$, respectively, by closing the two external legs onto themselves. From \eqref{modelAmplitudeFeynman2}, we can write
\beq \label{tildeHvac}
\mathcal{A}(\cH_{i}) = \lambda^{v(\cH_{i})} (N^2 -1) D^{1+\alpha(\cH_{i})} P_{\cH_{i}}(N^{-2})\, ,\quad 1\leq i\leq 2\eeq
and it is straightforward to see, using Lemma \ref{lemmaSchur} and the projector property $\bP^{a}_{\mu b, }{}^{c}_{\nu d} \bP^{d}_{\nu c, }{}^{e}_{\rho f}=\bP^{a}_{\mu b, }{}^{e}_{\rho f}$, that this is equivalent to 
\beq \label{tildeH2pt}
\mathcal{A}(\tilde\cH_{i}) = \lambda^{v(\cH_{i})} D^{\alpha(\cH_{i})} P_{\cH_{i}}(N^{-2})\bP\, ,\quad 1\leq i\leq 2\, ,\eeq
for the amplitudes of the corresponding Feynman 2-point graphs. Gluing these amplitudes to get $\mathcal A(\cG)$ then yields
\beq\label{gluing}
\alpha(\cG) = \alpha(\cH_1) + \alpha(\cH_2) \quad \mathrm{and} \quad  P_{\cG} = P_{\cH_1}  P_{\cH_2} \,.\eeq
By the induction hypothesis, $\alpha(\cH_{i})\leq\kappa\val P_{\cH_{i}}$ and thus
\beq\label{endinduc1}\alpha(\cG) \leq \kappa(\val P_{\cH_{1}}+\val P_{\cH_{2}})=\kappa\val P_{\cG}\, .\eeq
\end{proof}

The above recursion compactly encodes the cancellations due to the tracelessness condition. A crucial point is the trivial-looking Lemma \ref{lemmaSchur}. A similar result would not be true without the tracelessness condition: the tree-level propagator is then $\delta^{a}_{d}\delta^{c}_{b}$ but Feynman 2-point graphs can include the other structure $\delta^{a}_{b}\delta^{c}_{d}$. More generally, a similar result could be obtained as long as the matrices (or tensors) transform under irreducible representations of the relevant symmetry groups \cite{Benedetti:2017qxl, Carrozza:MixedPerm, Carrozza:2018psc}.

Applying the recursion to the chain of tadpoles $\cG_{v}$ of Fig.\ \ref{fig:tadpole_chain} immediately yields $P_{\cG_{v}}=P_{\cG_{1}}^{v}$. From a short calculation we get $P_{\cG_{1}}(x) = x$ and thus, from \eqref{modelAmplitudeFeynman2}, the amplitude reads $\mathcal A(\cG_{v}) = \lambda^{v}(N^{2}-1)D^{1+v/2}N^{-2v}$. This result displays the many cancellations amongst the stranded graphs, which reduce the naive power counting $N^{0}$ of the primary stranded graph down to $N^{-2v}$.

The proposition \ref{prop2PI}, combined with the inequality \eqref{valineq}, has a simple but useful corollary.

\begin{corollary}
\label{corr2PI} Assume that 
\beq\label{ineq5}\alpha(G_{*})\leq \kappa g(G_{*})\eeq
is valid for any 2PI primary stranded graph $G_{*}$, where $\kappa$ is a positive constant. Then $\alpha(\cG)\leq\kappa\val P_{\cG}$ for any Feynman graph $\cG$.
\end{corollary}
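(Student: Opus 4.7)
The plan is to reduce the statement to the 2PI case via Proposition \ref{prop2PI} and then chain together the hypothesis \eqref{ineq5} with the valuation bound \eqref{valineq}. The whole argument is essentially a short combination of already-established facts, with the genuine combinatorial difficulty having been packaged into the hypothesis.

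Concretely, I would first apply Proposition \ref{prop2PI} with the same constant $\kappa$: it suffices to verify $\alpha(\cG) \leq \kappa\,\val P_{\cG}$ under the extra assumption that $\cG$ is 2PI. So fix a 2PI vacuum Feynman graph $\cG$ and let $G_{*}$ denote its primary stranded graph. The quantity $\alpha$ defined in \eqref{Model2alphaExp1} depends only on $v(\cG)$ and on $\varphi(\cG) = \varphi(G)$ (which, as noted in the bullet points of Section \ref{sec:ganda}, is common to all stranded graphs in $\cS(\cG)$); in particular $\alpha(G_{*}) = \alpha(\cG)$. Likewise, by the very definition $g(\cG) := g(G_{*})$ the two genera coincide. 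The hypothesis \eqref{ineq5} therefore reads $\alpha(\cG) \leq \kappa\, g(\cG)$.

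It then only remains to invoke the valuation bound \eqref{valineq}, $\val P_{\cG} \geq g(\cG)$, which was established from \eqref{hgineq} for every Feynman graph. Since $\kappa > 0$, I would chain the two inequalities to conclude
\[
\alpha(\cG) \;=\; \alpha(G_{*}) \;\leq\; \kappa\, g(G_{*}) \;=\; \kappa\, g(\cG) \;\leq\; \kappa\, \val P_{\cG}\, .
\]

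There is no real obstacle in this step: the corollary is essentially a bookkeeping statement, and all the non-trivial work has been absorbed either into Proposition \ref{prop2PI} (which handled the cancellations enforced by the tracelessness condition on two-particle-reducible graphs) or into the assumed bound \eqref{ineq5} on 2PI primary stranded graphs. The genuine combinatorial difficulty — namely proving \eqref{ineq5} itself, with $\kappa = 1$ or $\kappa = 2$ — is what the subsequent sections of the paper are devoted to.
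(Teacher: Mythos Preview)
Your proof is correct and follows exactly the route the paper indicates: the paper states that the corollary arises from combining Proposition~\ref{prop2PI} with the inequality~\eqref{valineq}, and you have spelled out precisely that combination, including the identifications $\alpha(\cG)=\alpha(G_{*})$ and $g(\cG)=g(G_{*})$ needed to connect the hypothesis on primary stranded graphs to the Feynman graph itself.
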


This corollary, together with \eqref{Model2alphaExp1}, shows that the conjecture \ref{conj11}, or equivalently \eqref{conjecture}, would follow from a simple-to-state graph-theoretic result. Note that, in graph-theoretic terminology, our primary stranded graphs are connected 4-regular orientable embedded graphs and our $\oD$ cycles may be called ``straight-ahead'' cycles in view of the way they cross vertices.

\begin{conjecture}\label{conj2}
Let $G$ be a connected $4$-regular orientable embedded graph of genus $g$ with $v$ vertices and $\vphi$ straight-ahead cycles. If $G$ is 2PI, i.e.\ does not contain any two-edge-cut, then
\beq
\vphi \leq 1 + g + \frac{v}{2}\,\cdotp
\eeq
\end{conjecture}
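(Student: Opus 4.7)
I would attempt a proof by induction on the number $v$ of vertices, using local reductions on 2PI $4$-regular orientable embedded graphs while controlling how $\varphi$ and $g$ change. The first step is to encode $G$ as a combinatorial map $(\sigma,\iota)$, with $\sigma$ a permutation on the $4v$ half-edges whose cycles are the vertices (each of length $4$), and $\iota$ a fixed-point-free involution recording the edges. The ribbon faces are then the cycles of $\iota\sigma$ and the straight-ahead cycles are the cycles of $\iota\sigma^{2}$. Since Euler's formula gives $f = v + 2 - 2g$ for the ribbon faces, the desired inequality $\varphi \leq 1 + g + v/2$ is equivalent to the symmetric statement $2\varphi + f \leq e + 4$, suggesting an Euler-characteristic style argument that treats ribbon faces and straight-ahead cycles on equal footing.

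For the induction, the base case $v = 0$ gives the ring graph with $\varphi = 1$ and $g = 0$, so the bound holds with equality. The principal reduction move would be \emph{elementary melon removal}: if $G$ contains two vertices joined by three parallel edges, as in Fig.~\ref{fig:tadpolemelon}, one excises both vertices and reconnects the two external half-edges, so $v \to v - 2$. A case analysis based on how the straight-ahead pairings at the two melon vertices align with the external half-edges should show that $\varphi$ decreases by at most $2$, $g$ by $0$ or $1$, and the inequality is preserved. Secondary moves are \emph{dipole removal} (two parallel edges that do not form a $2$-edge cut) and pinching of isolated self-loops, handled by a similar case analysis; whenever a reduction breaks 2PI, Proposition~\ref{prop2PI} is invoked to decompose the result into 2PI pieces of strictly smaller size before continuing the induction.

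Graphs admitting no reduction move constitute an irreducible ``2PI core.'' For such graphs, the absence of elementary melons, dipoles and short self-loops, combined with 2PI, should force every ribbon face to have length at least some minimum, yielding via the handshake identity $2e = \sum_{F}|F|$ an upper bound on $f$. Combined with Euler's formula, this constrains $g$; together with a direct counting of how straight-ahead cycles can thread through the core, one hopes to close the inequality. The combinatorial techniques of Benedetti--Gurau--Harribey already used to prove Proposition~\ref{weakprop} are a natural starting point, but they would need to be refined.

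The main obstacle is precisely this last step. The analysis of irreducible 2PI cores is what costs the factor of $2$ in Proposition~\ref{weakprop}, and recovering the sharp coefficient $1$ in front of $g$ requires a finer accounting: morally, each handle of $\Sigma_{g}$ should charge at most one extra straight-ahead cycle beyond the Euler baseline. Making this charging explicit for arbitrary 2PI graphs, deprived of the colored-graph scaffolding available in the $\uN_{\text L}\times\uN_{\text R}\times\oD$ setting, is the central combinatorial challenge that explains why Conjecture~\ref{conj2} remains open.
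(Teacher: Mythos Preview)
The paper does not prove Conjecture~\ref{conj2}; it is explicitly stated as an open problem. Immediately after the statement, the authors write that they ``have tried to prove this conjecture inductively, using a strategy based on local modifications (called moves) of the graph akin to what we do in Section~\ref{sec:hermitiansix} for the wheel interaction, but we have not succeeded yet.'' Your proposal is therefore not to be compared against a proof, but against the paper's own diagnosis of why the problem is hard. On that score, your outline is broadly in the right spirit---induction on $v$ via local moves---and you are honest that it does not close. Your reformulation $2\varphi+f\leq e+4$ is correct and pleasant.

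That said, the paper's discussion is sharper than yours on two points. First, the authors identify the specific obstruction: graphs containing straight-ahead \emph{triangles} (3-cycles). Their successful proofs in Section~\ref{sec:hermitiansix} work precisely because triangles are excluded, either by bipartiteness or by $s_{0}\geq 3$; see the discussion around~\eqref{ellineqa}. Your ``irreducible 2PI core'' framing is vaguer and does not isolate this. Second, the paper's move-based strategy (Propositions~\ref{PropBipartite4} and~\ref{PropHermitianWheel}) is organized around deleting short $\oD$ cycles, not around melon or dipole subgraphs per se; your choice of melon removal as the primary reduction is a different decomposition and it is not clear it interacts well with the 2PI constraint, which is exactly the ``non-local constraint'' the authors flag as the main difficulty. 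Finally, two small slips: the reference you call ``Benedetti--Gurau--Harribey'' is in fact Benedetti--Carrozza--Gurau--Kolanowski~\cite{Benedetti:2017qxl}; and Proposition~\ref{prop2PI} is a statement about amplitudes and the tracelessness-induced polynomial $P_{\cG}$, not a purely graph-theoretic decomposition tool you can invoke inside an induction on embedded graphs.
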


We have tried to prove this conjecture inductively, using a strategy based on local modifications (called moves) of the graph akin to what we do in Section \ref{sec:hermitiansix} for the wheel interaction, but we have not succeeded yet. The main difficulty comes from the non-local constraint of two-particle irreducibility. A local move on a graph can easily produce non-trivial 2PR components. Even though we were able to overcome this problem in many instances, we were not able to fully treat stranded graphs containing $\oD$ cycles of length three (triangles). The advantage of the order six models studied in Section \ref{sec:hermitiansix} is that triangles do not arise. Another way to eliminate triangles is to consider the bipartite version of the model, as we do in \ref{sec:bitetra}.

\subsubsection{Proof of Proposition \ref{weakprop}}

Instead of \eqref{co1}, let us thus focus on the weaker inequality \eqref{co2}. We are actually going to prove the inequality \eqref{ineq5} for $\kappa=2$. It follows rather easily from the results of the previous subsection combined with a highly non-trivial general power-counting bound proven for a suitable family of stranded graphs in \cite{Benedetti:2017qxl}.

\begin{proposition}\label{propo1} \cite{Benedetti:2017qxl}
Let $G$ be a connected stranded graph. If $G$ does not contain elementary melon or tadpole $2$-point subgraphs (see Fig.\ \ref{fig:tadpolemelon}), then $\omega(G) \geq 0$ (or, equivalently, $\alpha(G)\leq 2h(G)$). 
\end{proposition}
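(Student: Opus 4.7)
My plan is to adapt the strategy of \cite{Benedetti:2017qxl}, exploiting the fact that primary stranded graphs in the matrix-vector setting are combinatorially identical to the stranded graphs arising in rank-three tensor models with tetrahedral interaction.

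First, I would reduce the problem to the case of primary graphs. From the formula $\omega = 3 + \frac{3}{2}v + b - f - \varphi$ together with the rules for cutting ribbon edges recalled after Eq.\ \eqref{Model2hExp2}, one checks that cutting a regular ribbon edge increases $\omega$ by one, while cutting a singular ribbon edge preserves $\omega$. Hence $\omega(G) \geq \omega(G_*)$ for any descendant $G \in \mathcal S(\cG)$, and it suffices to prove $\omega(G_*) \geq 0$. The hypothesis of no elementary melon or tadpole 2-point subgraphs is a property of the underlying Feynman graph $\cG$ and is automatically inherited by $G_*$.

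For a primary graph, the target inequality becomes $f(G_*) + \varphi(G_*) \leq 3 + \frac{3}{2}v(G_*)$, a purely combinatorial statement about connected 4-regular orientable embedded graphs endowed with a distinguished straight-ahead cycle structure and constrained by the melon/tadpole-freeness hypothesis. My strategy would be strong induction on $v(G_*)$, combined with a classification of local contraction moves that strictly decrease $v$ without decreasing $\omega$. For the irreducible graphs on which no such move applies, a direct case-by-case verification based on a small family of ``core'' configurations would establish the bound; the base of the induction is the ring graph and the tetrahedral vertex.

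The main obstacle is the classification of those irreducible skeletons. The absence of elementary melons and tadpoles rules out the most elementary dipoles, but nearly-melonic structures, such as four-point subgraphs with two internal edges shared between two vertices, can persist and require careful taxonomy. In particular, $\oD$-cycles of length three (triangles) affect the face count in subtle ways and demand dedicated bookkeeping. This is precisely where the bulk of the technical work in \cite{Benedetti:2017qxl} lies; transplanting that analysis to the present matrix-vector context is essentially direct once the reduction to primary graphs is in place, since at the level of primary stranded configurations the two frameworks coincide.
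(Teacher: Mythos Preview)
The paper does not supply its own proof of this proposition; it is quoted as a result of \cite{Benedetti:2017qxl} and used as a black box in the proof of Proposition~\ref{weakprop}. Your proposal to sketch the argument by adapting the inductive scheme of \cite{Benedetti:2017qxl} is therefore consistent with what the paper does, and the overall strategy you describe (local deletion moves plus a finite case analysis of irreducible cores) is indeed the route taken in that reference.

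Two remarks on your reduction to primary graphs. First, there is a small arithmetic slip: since $\omega = 2h - \alpha$ and $\alpha$ depends only on the underlying Feynman graph $\cG$, cutting a regular ribbon edge increases $\omega$ by $2$, not by $1$ (and cutting a singular edge leaves $\omega$ unchanged). The conclusion $\omega(G)\geq\omega(G_{*})$ is of course unaffected. Second, this reduction, while correct and pleasant, is not the way \cite{Benedetti:2017qxl} proceeds: the bound there is established directly for general stranded graphs (with broken or twisted strands), so the result already covers descendants without any preliminary step. Your reduction buys a cleaner target inequality, $f+\varphi\leq 3+\tfrac{3}{2}v$ on a $4$-regular embedded graph, at the cost of having to re-run the case analysis rather than simply invoking the existing theorem; the paper opts for the latter.
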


\noindent\emph{Proof of Prop.\ \ref{weakprop}}. Primary stranded graphs with one or two vertices are obtained by gluing the external legs of the nine Feynman 2-point graphs depicted in Fig.\ \ref{fig:tadpolemelon}. For all these cases, it is immediate to check that $\alpha\leq 0$ and thus \eqref{ineq5} is satisfied. A 2PI primary stranded graph with $v\geq 3$ clearly cannot contain an elementary melon or tadpole $2$-point subgraph. Since $h=g$ for primary stranded graphs, we can apply Prop.\ \ref{propo1} to derive \eqref{ineq5} for $\kappa=2$. Prop.\ \ref{weakprop} then follows from Corr.\ \ref{corr2PI}.\qed

%%%%%%%%%%%%%%%%%%%%%%%%%%%%%%%%%%%%%%%%%%%%%%%%%%%%%%%%%%%%%%%%%%%%%
%%%%%%%%%%%%%%%%%%%%%%%%%%%%%%%%%%%%%%%%%%%%%%%%%%%%%%%%%%%%%%%%%%%%%

%%%%%%%%%%%%%%%%%%%%%%%%%%%%%%%%%%%%%%%%%%%%%%%%%%%%%%%%%%%%%%%%%%%%%
%%%%%%%%%%%%%%%%%%%%%%%%%%%%%%%%%%%%%%%%%%%%%%%%%%%%%%%%%%%%%%%%%%%%%

\section{\label{sec:hermitiansix}Hermitian tetrahedral bipartite and wheel interactions}

In this section, we focus on models for which the existence of the large $D$ limit and the bound \eqref{conjecture} do not require to impose the tracelessness condition. In these cases, the free propagator that replaces \eqref{Model2FreeProp} is simply
\beq \label{ModelBipartiteGen}
\bigl\langle X^{a}_{\mu\, b} X^{c}_{\nu\, d} \bigr\rangle_0 = \delta_{\mu\nu}\delta^a_d \delta^c_b\, .\eeq
There is now a unique stranded graph associated with a Feynman graph and we denote both by $G$. Most of the discussion in Section \ref{sec:ganda} is then much simplified. 

\subsection{\label{sec:gensecherm}Amplitudes, $n$-cycles and strategy of proof}

Let us define the \emph{grade} of a general stranded graph, containing $e$ propagators \eqref{ModelBipartiteGen} and $v$ vertices associated with bubbles $\cB_{a}$ having $x(\cB_{a})=x_{a}$ (see Section \ref{reviewFRV}) as 
\begin{align}\label{levelbip1}\ell & :=4c - 2v+2e-f-2\varphi+\sum_{a}x_{a}\\
\label{levelbip2} & = 2c+2g+e+\sum_{a}x_{a}-v-2\varphi\, ,
\end{align}
where $c$ is the number of connected components of the graph and $g$ its genus. This formula replaces and is consistent with \eqref{Model2ellExp1}; note that presently $b=0$ and that for the quartic interaction studied in Section \ref{sec:Hermitian}, $e=2v$ and $x=0$.

Using \eqref{GenusColored2}, with $x(\cB_{a})=x_{a}$, and the fact that, in a connected graph containing interactions of various orders $2s_{a}$, the number of edges is given by
\beq\label{edgeformula} e=\sum_{a}s_{a}\, ,\eeq
we find that the amplitude of a general graph can be written as
\beq \label{ModelBipartiteAmpl}
\cA= |\lambda|^{v} N^{2-2g}D^{1+g-\frac{\ell}{2}}\, .\eeq
Assuming that we can find graphs having $\ell=0$ at any genus, which we can for the models we study below, the strong form of the conjecture given by \eqref{conjecture} is thus equivalent to
\beq\label{lpositive}\ell\geq 0\, .\eeq

We call \emph{$n$-cycles} the $\oD$ cycles of length $n$, i.e.\ passing through $n$ edges. If $\varphi_{n}$ is the number of $n$-cycles, then
\beq\label{looprel} \varphi = \sum_{n\geq 1}\varphi_{n}\,,\quad e=\sum_{n\geq 1}n\varphi_{n}\, .\eeq
If $s_{0}\geq 2$ is the minimal value taken by the $s_{a}$, i.e.\ $2s_{0}$ is the lowest possible degree for an interaction, \eqref{edgeformula} implies that $e\geq s_{0}v$. Since $x_{a}\geq 0$, Eq.\ \eqref{levelbip2} then yields
\beq\label{ellineqa} \ell\geq 2+2 g + \sum_{n\geq 1}\frac{(s_{0}-1)n-2s_{0}}{s_{0}}\,
\varphi_{n}\, .\eeq
This inequality is very useful. It implies that:
\begin{itemize}

\item If the quartic tetrahedral interaction is included, $s_{0}=2$; then graphs for which $\varphi_{1}=\varphi_{2}=\varphi_{3}=0$ have $\ell\geq 2+2g$.

\item If the model has only sextic or higher interaction terms, $s_{0}\geq 3$; then graphs for which $\varphi_{1}=\varphi_{2}=0$ have $\ell\geq 2+2g$.

\end{itemize}

These simple facts suggest a straightforward strategy to prove that $\ell\geq 0$, which is standard in the tensor models literature (see e.g. \cite{Bonzom_2011}). We proceed recursively on the number $v$ of vertices in the graph. It is convenient to work with graphs that are not necessarily connected. It is trivial to check that $\ell\geq 0$ for the ring graph $v=0$ depicted in Fig.\ \ref{fig:ring} (it is also an easy but instructive exercice to check this explicitly for a few other graphs, for instance all the $v=1$ graphs). We then consider a graph with an arbitrary number $v$ of vertices. If this graph does not contain 1-cycles, 2-cycles or 3-cycles, we are done. If it does, we devise local modifications of the graph, called \emph{moves}, that reduce the number of vertices and at the same time eliminate the unwanted $n$-cycles, \emph{without increasing $\ell$.} We can then conclude by the recursion hypothesis.

It turns out that the configurations with 3-cycles are the trickiest. These configurations can be made irrelevant either by considering a models with a bipartite structure, for which cycles must have even length, as we do in \ref{sec:bitetra}; or by considering models with $s_{0}\geq 3$, as we do in \ref{sec:Hermwheel}.

\smallskip

\noindent\emph{Remark: normal-ordered and bipartite models with no MST interactions}

In a general model with no MST interaction terms, $x_{a}\geq 1$ and thus $\sum_{a}x_{a}\geq v$. Plugging this inequality in \eqref{levelbip2} and using \eqref{looprel}, we find that for a connected graph
\beq\label{noMSTineq} \ell\geq 2+2 g +\sum_{n\geq 1}(n-2)\varphi_{n}\, .\eeq
In particular, in normal-ordered models, for which self-contractions are forbidden, or in bipartite models, for which odd-length cycles are forbidden, the existence of the large $D$ limit is trivially ensured if no MST terms are included, with $\eta(h)\leq -1$.

\subsection{\label{sec:bitetra} The complex bipartite tetrahedral model}

We study the complex bipartite tetrahedral model, which is the model of the form \eqref{ModelBipartiteColored} with the tetrahedral interaction. One has $e=2v$ and the grade \eqref{levelbip1} reads
\beq\label{levtetbip} \ell = 4 c + 2v -f-2\varphi\, .\eeq
Feynman graphs in this model are \emph{bipartite}: black vertices are chosen to correspond to interactions $I_{\cB_T}(X,X)$ in \eqref{ModelBipartiteColored}, white vertices to their Hermitian conjugates and edges can join black and white vertices only. The bipartite structure of the graph implies that cycles must have even length. Graphs that do not contain 2-cycles thus have $\ell\geq 2+2g$.

\begin{proposition} \label{PropBipartite4} In the complex bipartite tetrahedral model, any Feynman graph $G$ has non-negative grade, $\ell\geq 0$ and this is the best possible bound.
\end{proposition}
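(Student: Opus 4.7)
The plan is to prove $\ell(G) \geq 0$ by strong induction on $v(G)$, following the strategy outlined at the end of Section \ref{sec:gensecherm}. The base case is the ring graph ($v=0$): a direct count from \eqref{levtetbip} gives $c=1$, $f=2$, $\varphi=1$, and hence $\ell = 0$. For the inductive step, bipartiteness automatically forces $\varphi_1 = \varphi_3 = 0$, so the only short $\oD$-cycles that can drive $\ell$ below zero are 2-cycles. If $G$ contains no 2-cycle at all, then $\varphi_1 = \varphi_2 = \varphi_3 = 0$ and, as already noted just before the proposition, $\ell \geq 2 + 2g \geq 0$, settling that case.

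It remains to handle the situation where $G$ contains at least one 2-cycle. Such a cycle is made of two edges $e_1, e_2$ joining a black vertex $b$ to a white vertex $w$, with both $\oD$-strands closing through $b$ and $w$. Because the $\oD$-strand goes straight across each tetrahedral vertex, $e_1$ and $e_2$ must sit at opposite cyclic positions at both $b$ and $w$. Let $\alpha, \beta$ denote the two other half-edges at $b$ and $\gamma, \delta$ those at $w$. I introduce a \emph{melon-deletion move} that removes $b, w, e_1, e_2$ and reconnects $\alpha$ to $\gamma$ and $\beta$ to $\delta$, the pairing being dictated by the ambient ribbon orientation; the resulting graph $G'$ remains bipartite since each new edge connects a former external neighbour of $b$ (white) to a former external neighbour of $w$ (black). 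The move satisfies $\Delta v = -2$ and $\Delta e = -4$, so substitution in \eqref{levtetbip} yields
\[
\Delta \ell \;=\; 4\,\Delta c - \Delta f - 2\,\Delta \varphi - 4 .
\]
A case analysis on whether the deletion (i) separates or merges connected components, (ii) merges or splits the $\uN$-faces passing through the melon, and (iii) merges or splits the $\oD$-cycles threading $b$ and $w$ establishes $\Delta \ell \leq 0$ in every configuration. The induction hypothesis applied to $G'$ then gives $\ell(G) \geq \ell(G') \geq 0$.

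The principal obstacle will be the faithful enumeration of these subcases, particularly the degenerate ones in which an external endpoint of $\alpha$ or $\beta$ coincides with $w$ (or one of $\gamma, \delta$ with $b$), producing self-loops or splitting off an isolated component after the move; the induction must therefore be phrased so as to cover possibly disconnected bipartite graphs. The essential simplification over the fully Hermitian setting of Section \ref{sec:Hermitian} is precisely that bipartiteness rules out 3-cycles, which are the configurations that blocked the analogous inductive approach to Conjecture \ref{conj2}. Finally, sharpness of the bound is obtained by exhibiting bipartite graphs with $\ell = 0$ at every genus: starting from the ring graph and iterating a bipartite analogue of the move in Fig.\ \ref{fig:moveKg}, which adds four vertices, two $\uN$-faces and three $\oD$-cycles while raising the genus by one — thereby preserving $\ell$ — produces such a family.
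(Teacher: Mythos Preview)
Your proposal is correct and follows essentially the same route as the paper: induction on $v$ with the ring graph as base case, bipartiteness to kill odd-length cycles, and a deletion move on the non-planar dipole carrying the 2-cycle, followed by the same sharpness argument via the genus-raising move of Fig.~\ref{fig:moveKg}. One simplification you are missing: with the reconnection ``dictated by the ambient ribbon orientation'' (which is indeed the paper's choice), the move preserves the $\uN$-face structure exactly, so $\Delta f = 0$ in \emph{every} configuration and your item (ii) is vacuous; the case analysis then reduces to three cycle configurations governed by $\Delta c \in \{0,1\}$ and $\Delta\varphi \in \{-2,-1,0\}$, each giving $\Delta\ell \leq 0$ immediately. (Terminological nit: the two-edge configuration is a dipole, not a melon.)
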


\begin{figure}
    \centering
    \includegraphics[scale=1]{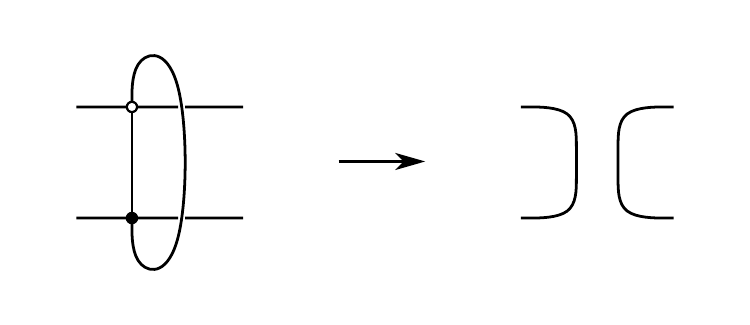}
    \caption{Dipole deletion move.}
    \label{fig:dipoledeletion}
\end{figure}

\begin{figure}
    \centering
    \includegraphics[scale=.7]{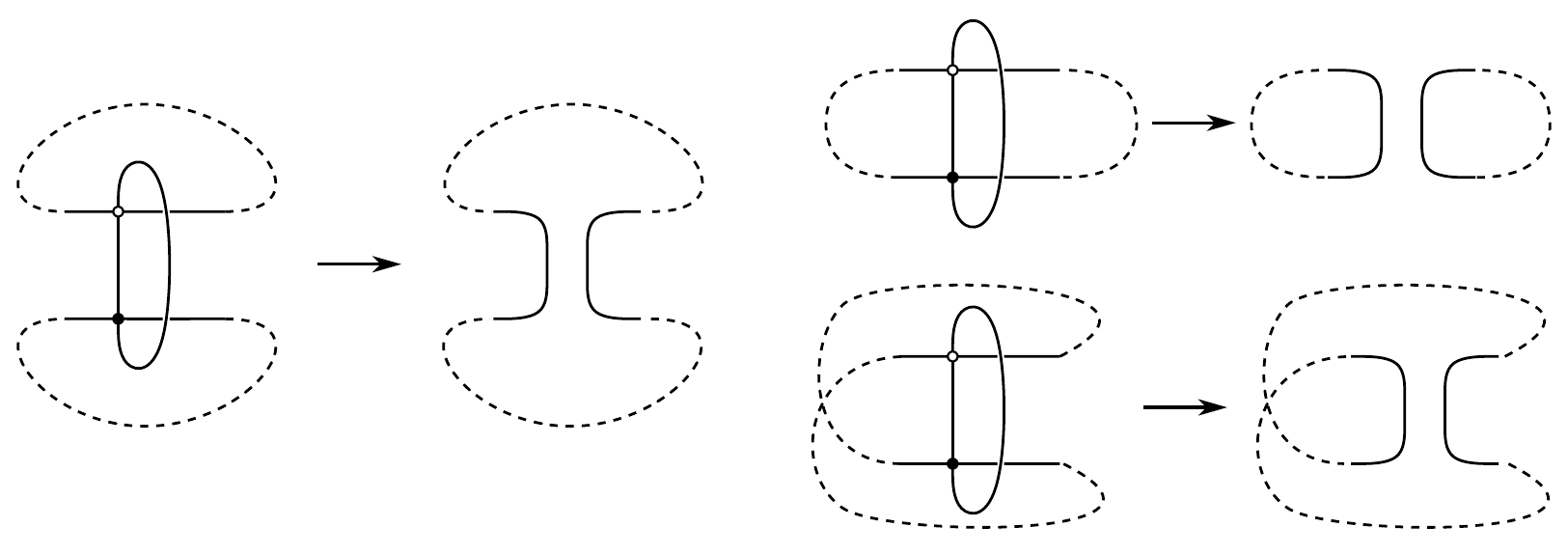}
    \caption{The three possible cycle configurations discussed in the main text.}
    \label{fig:cycleindipoletetra}
\end{figure}

\begin{proof} As outlined in \ref{sec:gensecherm}, we proceed by induction on $v$. 

If $v = 0$, $G$ is a collection of ring graphs; hence, $f = 2 c = 2 \vphi$ and $\ell = 0$.

Let $v>0$ and assume that all graphs, connected or not, with strictly less vertices than $v$, have $\ell\geq 0$. Consider a graph $G$ with $v$ vertices, $f$ faces, $\varphi$ cycles and $c$ connected components. If it does not contain a 2-cycle, $\ell\geq 0$ and we are done. If it does contain a 2-cycle, then, taking into account the bipartite contraint, the non-planar dipole represented on the left of Fig.\ \ref{fig:dipoledeletion} must be present in the graph.

We then consider the local move (``dipole deletion'') depicted in Fig.\ \ref{fig:dipoledeletion}. This yields a new graph $G'$ with $v'$ vertices, $f'$ faces, $\varphi'$ cycles and $c'$ connected components. The move preserves the bipartite character of the graph and the face structure, so that $f'=f$. It removes two vertices, so that $v'=v-2$. Moreover, the structure of the move implies that the number of connected component either remains the same, $c'=c$, or increases by one unit, $c'=c+1$.

One can then contemplate the three possible situations for the cycles that go through the dipole, Fig.\ \ref{fig:cycleindipoletetra}. If we have three distinct cycles to begin with (left inset of the figure), then $c'=c$ and $\varphi'=\varphi-2$, which yields $\ell'=\ell$. If we have two distinct cycles to begin with, then: either $\varphi'=\varphi$ (upper-right inset) which, taking into account $c'-c\leq 1$, yields $\ell'\leq\ell$; or $c'=c$ (lower-right inset) and then $\varphi'=\varphi-1$ and $\ell'=\ell-2$. In all cases, $\ell'\leq \ell$ and thus $\ell\geq 0$ by the induction hypothesis.

Note finally that the bound $\ell\geq 0$ is the best possible. At genus zero, the usual (genera\-lized) melons have $\ell=0$. One can then use the move depicted in Fig.\ \ref{fig:moveKg}, which respects the bipartite structure, to construct $\ell=0$ graphs at any genus. We conclude that the conjecture \ref{conjecture}, $\eta(h)=h$, holds.\end{proof}

\subsection{\label{sec:Hermwheel}The Hermitian wheel model}

We consider the Hermitian wheel model, for which the action reads
\beq \label{HermitianWheel}
S = \frac{1}{2}\tr X_\mu X_\mu + \frac{\lambda}{8 N^{2}D} \, \tr X_\mu X_\nu X_\rho X_\mu X_\nu X_\rho \, .\eeq
The graphs of this model have $e=3v$ and the grade \eqref{levelbip1} reads
\beq\label{levwheel} \ell = 4 c + 4v -f-2\varphi\, .\eeq
Note also that Euler's formula $2c-2g=f-e+v=f-2v$ implies that the number of faces is necessarily even. 

\begin{proposition} \label{PropHermitianWheel} In the Hermitian wheel model, any Feynman graph $G$ has non-negative grade, $\ell\geq 0$ and this is the best possible bound.
\end{proposition}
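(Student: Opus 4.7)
The plan follows the inductive strategy outlined in Section \ref{sec:gensecherm}, specialized to the sextic situation $s_0 = 3$ and working with graphs that need not be connected. For the base case $v = 0$, any graph is a disjoint union of rings with $f = 2c = 2\varphi$, giving $\ell = 0$ directly from \eqref{levwheel}. For the inductive step, the general estimate \eqref{ellineqa} becomes
\beq
\ell \geq 2c + 2g + \tfrac{2}{3}\sum_{n \geq 1}(n - 3)\,\varphi_n,
\eeq
so as soon as $G$ contains no 1-cycle and no 2-cycle we already have $\ell \geq 2c + 2g \geq 2 > 0$. The decisive point is that 3-cycles contribute with coefficient zero and longer cycles contribute positively, which is precisely the advantage of the sextic vertex over the quartic tetrahedral model of Section \ref{sec:Hermitian}.

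It therefore suffices to handle the two remaining situations, in which $G$ contains either a 1-cycle or a 2-cycle. In each case I would introduce a local deletion move, analogous to the dipole deletion used in Proposition \ref{PropBipartite4}, but adapted to the hexagonal wheel vertex whose $\oD$ strands pair opposite corners via three chords. A 1-cycle is a self-loop at some vertex $v_0$ whose two ends lie at opposite corners of $v_0$; the move excises $v_0$ together with the self-loop and reconnects the four remaining half-edges in the unique way compatible with the chord and ribbon structure, producing a graph with $v - 1$ vertices. A 2-cycle is a pair of vertices $v_0, v_0'$ joined by two parallel edges whose $\oD$ strands form a length-two cycle; the move deletes both vertices and these two edges, and reglues the eight remaining half-edges along the pairing imposed by the chords on either side. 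In each case one computes $\Delta c$, $\Delta f$, $\Delta \varphi$ by tracking how the surrounding $\oD$ chords and $\uN$ faces are merged or split through the excised region, and a short case analysis shows $\ell' \leq \ell$ in every sub-configuration; the induction hypothesis then yields $\ell \geq 0$. Sharpness of the bound follows by exhibiting explicit $\ell = 0$ families: the generalized melons realize $\ell = 0$ at $g = 0$, and genus-raising operations preserving the grade then produce examples at arbitrary genus.

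The principal obstacle is the combinatorial case analysis at the level of the moves. Compared to the bipartite tetrahedral setting of Proposition \ref{PropBipartite4}, the wheel vertex carries three independent chords and the graph need not be bipartite, so the number of $\oD$ cycles and $\uN$ faces incident to the excised tadpole or dipole can take several values depending on which chords meet the deleted edges. Additional degenerate sub-cases arise when the neighbourhood contains further short cycles or self-loops, when the two vertices of a 2-cycle share other edges, or when removing the dipole disconnects the graph; checking that $\ell' \leq \ell$ uniformly across all these sub-configurations, while ensuring that the moves always return a legitimate wheel Feynman graph, is where the bulk of the technical work concentrates.
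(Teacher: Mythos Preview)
Your overall strategy matches the paper's: induction on $v$, the base case of ring graphs, the observation that \eqref{ellineqa} with $s_0=3$ disposes of any graph without 1- or 2-cycles, and the intent to remove short cycles by local deletion moves. The sharpness argument via $\ell=0$ melons at $g=0$ plus a genus-raising grade-preserving move is also what the paper does.

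There are, however, two genuine gaps. The minor one: a 2-cycle need not pass through two distinct vertices. With the hexavalent wheel vertex it can be formed by two self-loops on a single vertex, and the paper treats this separately (two small configurations, each handled by a one-vertex deletion).

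The essential gap concerns 2-cycles through two distinct vertices. You propose a single deletion move that ``reglues the eight remaining half-edges along the pairing imposed by the chords on either side'' and expect a short case analysis to give $\ell'\leq\ell$. This does not work. For any single such move one only obtains $\Delta\ell\leq 4\Delta c$, and $\Delta c$ can be $1$, $2$ or $3$; in those cases a fixed move may strictly increase $\ell$. The paper's key device is to introduce \emph{three} alternative deletion moves (which it calls A, B, C), corresponding to three different ways of regluing the eight external half-edges, and to use them adaptively: one first tries move B; if it raises $\Delta c$, one inspects the resulting connected-component pattern and replaces B by A or C, which in each of the finitely many patterns keeps $\Delta c=0$ (or, in the worst $\Delta c=3$ pattern, gives $\Delta c=1$ together with $\Delta\varphi=-1$, which still yields $\Delta\ell\leq 0$). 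The existence of a move with $\Delta\ell\leq 0$ is thus not automatic from a canonical regluing but requires this switching argument. Your outline does not anticipate this, and without it the induction step for the two-vertex 2-cycle does not close.
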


\begin{proof} We proceed again by induction on $v$, as outlined in \ref{sec:gensecherm} and as in the proof of Prop.~\ref{PropBipartite4}.

If $v=0$, $G$ is a collection of ring graphs and has $\ell=0$.

Let $v>0$ and assume that all graphs, connected or not, with strictly less vertices than $v$, have $\ell\geq 0$. Consider a graph $G$ with $v$ vertices, $f$ faces, $\varphi$ cycles and $c$ connected components. If it does not contain a 1-cycle or a 2-cycle, $\ell\geq 0$ and we are done.

In the following, we consider local moves on the graph $G$. We note the variations of the various attributes of the graph as $\Delta v$, $\Delta f$, etc.

\paragraph{One-cycles}

\begin{figure}[h]
    \centering
    \includegraphics[scale=1.2]{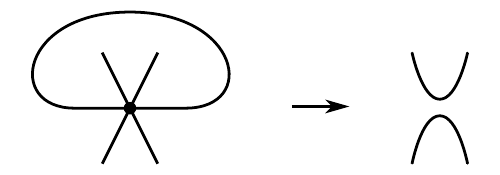}
    \caption{Move deleting 1-cycles.}
    \label{fig:p1}
\end{figure}

\begin{figure}[h]
    \centering
    \includegraphics[scale=1.2]{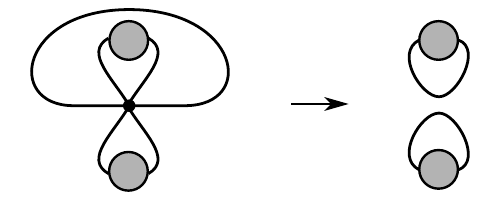}
    \caption{Case for which the move yields $\Delta c=1$. Grey shaded circles represent connected 2-point graphs.}
    \label{fig:p2}
\end{figure}

If $G$ contains a 1-cycle, a vertex with a self-contraction, as represented on the left of Fig.\ \ref{fig:p1}, must be present in the graph. We then consider the local move depicted in the figure. Obviously, $\Delta v = -1$. The move preserves the structure of the face and thus $\Delta f =0$. Initially, three cycles, at most, may go through the vertex, thus $\Delta\varphi\geq -2$. Morevoer, $\Delta c=0$ or 1. If $\Delta c=0$, $\Delta \ell = -4-2\Delta\varphi\leq 0$. If $\Delta c =1$, the initial structure must be as depicted in Fig.\ \ref{fig:p2}. But then there are only two distinct cycles that go through the vertex in the initial configuration, and these yield two distinct cycles after the move is performed. Thus $\Delta\varphi =0$ and $\Delta \ell = 0$. In all cases, $\Delta\ell\leq 0$ and thus $\ell\geq 0$ by the induction hypothesis.

\paragraph{Two-cycles that go through a unique vertex}

\begin{figure}[h]
    \centering
\includegraphics[scale=1]{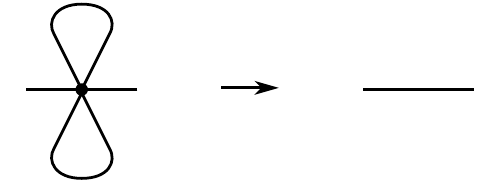}\hskip 2.5cm \includegraphics[scale=1]{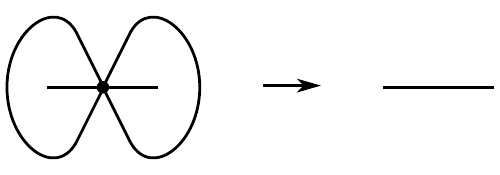}
    \caption{Move deleting 2-cycles that go through a unique vertex.}
    \label{fig:p3}
\end{figure}

If $G$ contains a 2-cycle, the 2-cycle may go through a unique vertex or through two distinct vertices. The first case is the simplest and yields two possible configurations depicted in Fig.\ \ref{fig:p3} on which we apply simple deletion moves. For the case on the left of the figure, the move yields $\Delta v=-1$, $\Delta c=0$, $\Delta f=-2$ and $\Delta\varphi=-1$, so $\Delta\ell = 0$. For the case on the right of the figure, $\Delta v=-1$, $\Delta c=0$, $\Delta f\geq -2$ (because there are at most three distinct faces that go through the vertex to start with, and there is of course at least one remaining after the move) and $\Delta \varphi = -1$. Overall, $\Delta\ell\leq 0$. In all cases, $\Delta\ell\leq 0$ and thus $\ell\geq 0$ by the induction hypothesis.

\paragraph{Two-cycles that go through two distinct vertices}

\begin{figure}[h]
    \centering
\includegraphics[scale=.8]{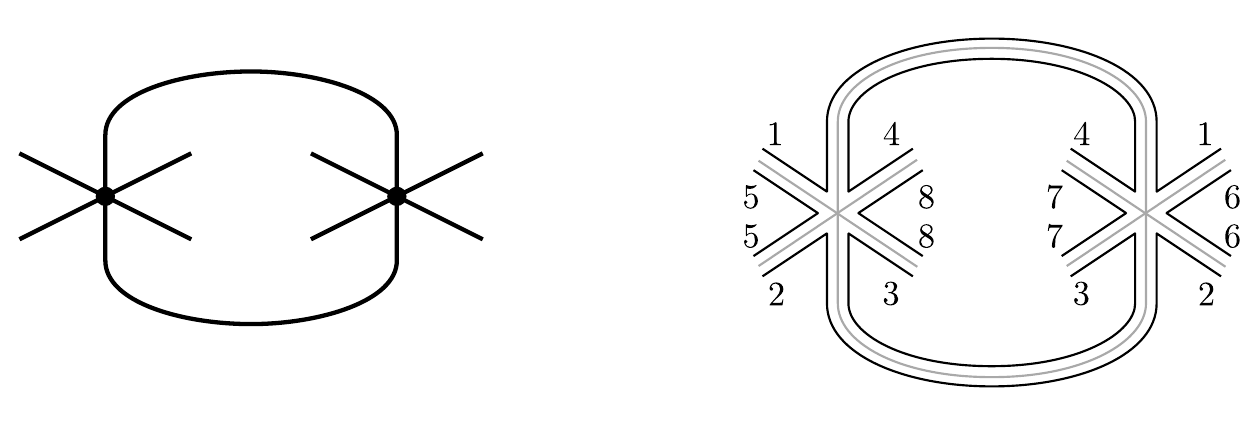}
    \caption{The configuration for which a 2-cycle goes through two distinct vertices. Left inset: Feynman graph. Right inset: stranded graph with labels used in the discussion of the main text for the faces that go through the vertices. Note that distinct labels may correspond to the same face.}
\label{fig:p4}
\end{figure}

\begin{figure}[h]
    \centering
\includegraphics[scale=.7]{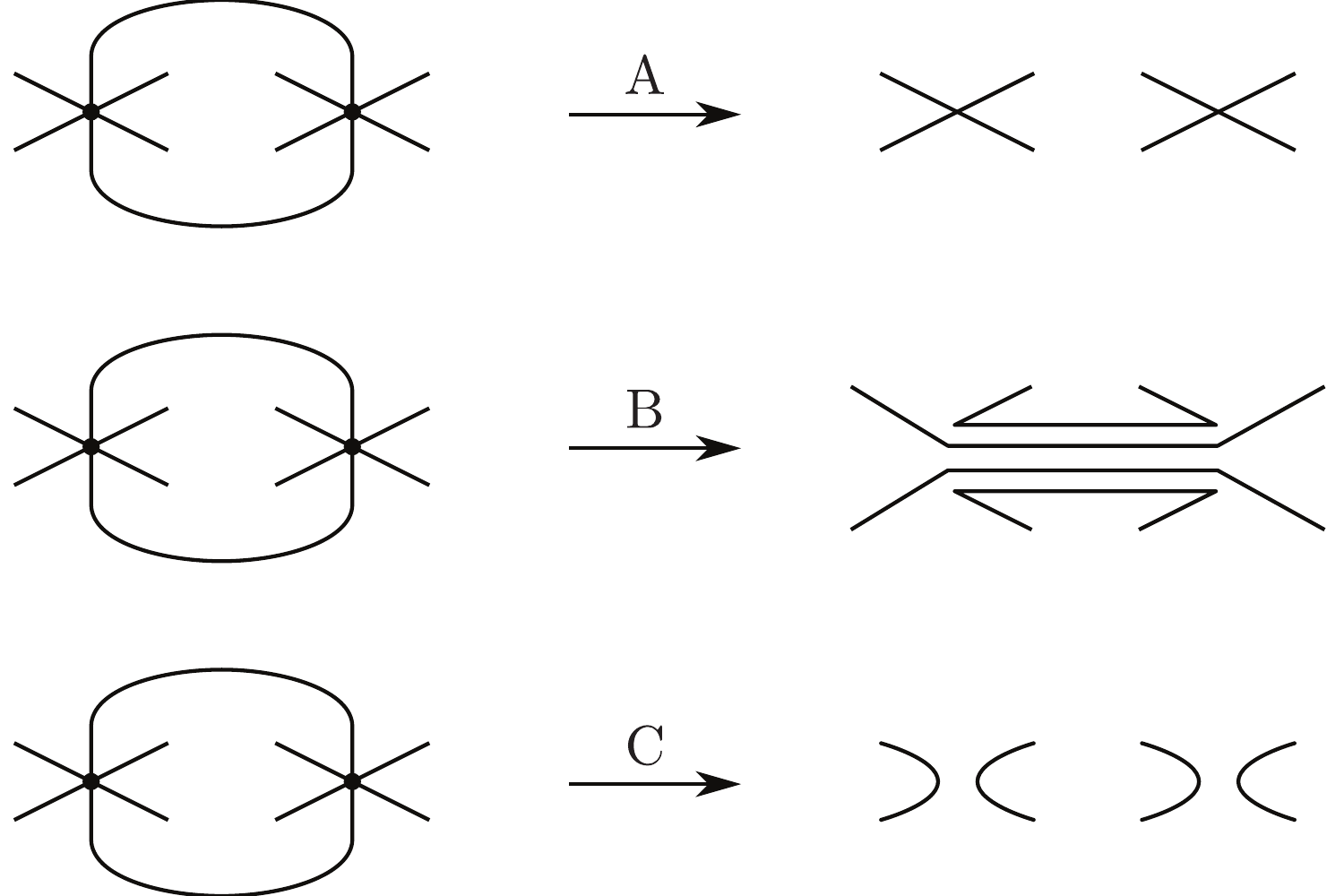}
\caption{Moves A, B and C deleting 2-cycles that go through two distinct vertices.}
\label{fig:p5}
\end{figure}

Finally, if $G$ contains a 2-cycle that go through two distinct vertices, we are in the situation depicted in Fig.\ \ref{fig:p4}. In the rest of the proof, when we refer to faces or cycles, we always mean faces or cycles that go through the two vertices. There are thus at most 5 cycles and 8 faces; the latter are numbered 1 to 8. Of course, depending on the global structure of the graph, distinct numbers may be associated to the same face. 

To treat this case, we consider three possible $\Delta v = -2$ moves depicted in Fig.\ \ref{fig:p5}. It is useful to recall that the number of faces in our graphs must be even and thus $\Delta f$ must also be even. It is also clear that $0\leq\Delta c\leq 3$.
We have the following general properties.
\begin{itemize}
\item
The move A preserves the structure of the cycles, except of course for the 2-cycle that we delete. 
Starting from the eight faces, no more than seven of them can be deleted. Furthermore, this number must be even, which brings the bound down to six. We thus have $\Delta f\geq -6$, $\Delta\varphi = -1$ and $\Delta\ell = 4\Delta c - \Delta f - 6\leq 4\Delta c$. 
\item The moves B and C preserve the structure of four of the faces (1, 2, 3, 4 for move B and 5, 6, 7, 8 for move C) and can be achieved by two cut-and-glue operations: on the pairs of strands $(5, 6)$ and $(7,8)$ (move B), or $(1, 2)$ and $(3, 4)$ (move C). 
As for the cycles, a priori we may start with five distinct cycles and end up with only one. However, this does not occur: applying the cut-and-glue operations associated with either move B or C yield a configuration that always has two cycles left, see Fig.\ \ref{fig:p6}. We thus have $\Delta f\geq -2$, $\Delta\varphi\geq -3$ and $\Delta\ell\leq 4\Delta c$.
\end{itemize}

\begin{figure}[h]
    \centering
\includegraphics[scale=.7]{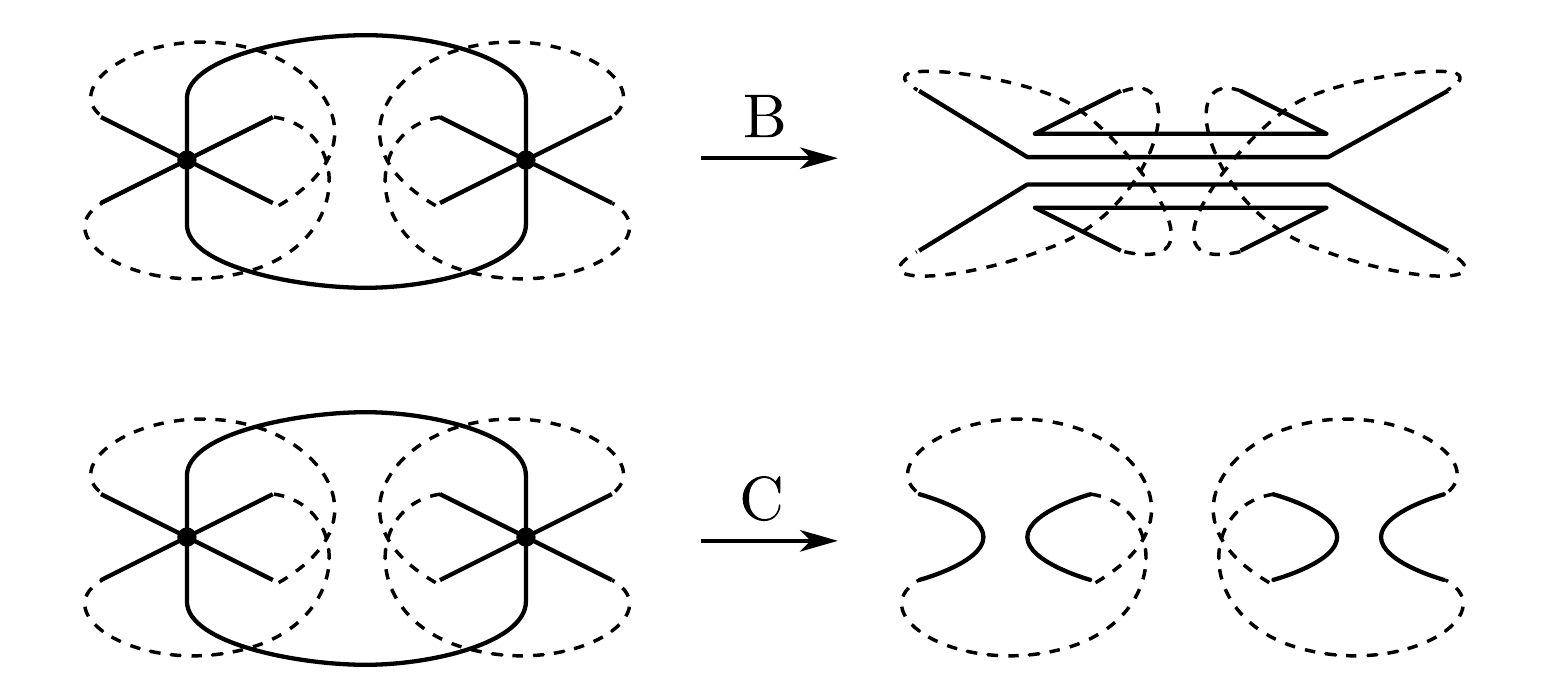}
\caption{Starting with five distinct cycles passing through the vertices and applying the moves B or C yields two distinct cycles and thus $\Delta\varphi = -3$.}
\label{fig:p6}
\end{figure}

In particular, we note that $\Delta\ell\leq 4\Delta c$ for the three moves.

Let us start by using the move B and see what happens.
\begin{itemize}
\item If $\Delta c = 0$, $\Delta\ell\leq 0$ and we're done.
\begin{figure}[h]
    \centering
\includegraphics[scale=1]{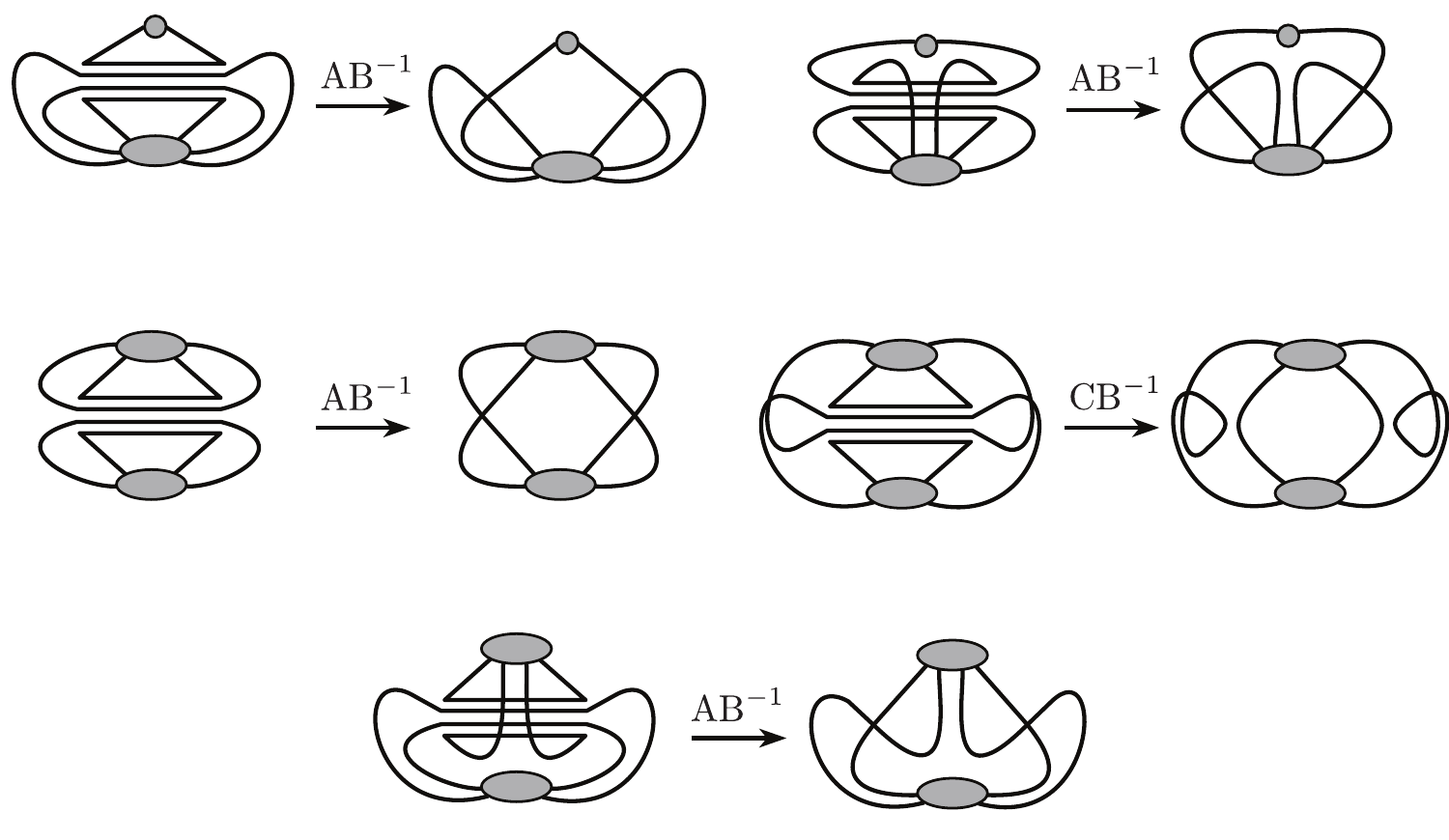}
    \caption{Configurations obtained from the move B when $\Delta c =1$ (left graph of each inset) can be replaced by new configurations for which $\Delta c =0$ by using either A or C. We denote these transformations by $\text{A}\text{B}^{-1}$ or $\text{C}\text{B}^{-1}$, to emphasize the fact that we first cancel the B-move and replace it by an A- or C-move instead. Grey shaded circles represent connected 2-point, 4-point or 6-point graphs.\label{fig:p7}}
\end{figure}
\item If $\Delta c = 1$ then, up to a trivial upside/down symmetry of the configurations, we are in one of the five cases depicted in Fig.\ \ref{fig:p7} (where the grey blobs represent connected subgraphs). As shown in the figure, we can then apply either move A or C instead of B to produce a new graph with the same number of connected components as the original and thus such that $\Delta\ell\leq 0$.
\begin{figure}[h]
    \centering
\includegraphics[scale=1]{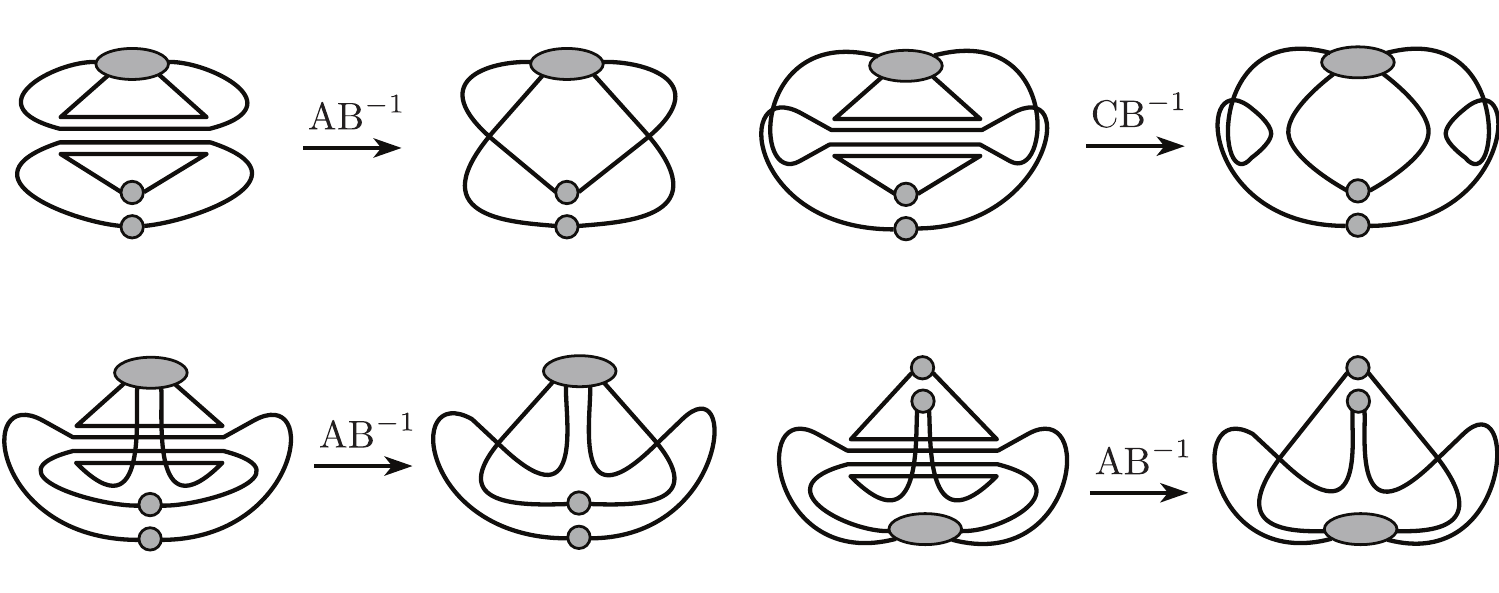}
    \caption{Similarly to Fig.\ \ref{fig:p7}, configurations obtained from the move B when $\Delta c =2$ (left graph of each inset) can be replaced by new configurations for which $\Delta c =0$ by using either A or C.\label{fig:p8}}
\end{figure}
\item If $\Delta c = 2$ then, up to a trivial upside/down symmetry of the configurations, we are in one of the four cases depicted in Fig.\ \ref{fig:p8}. As shown in the figure, we can then apply either move A or C instead of B to produce a new graph with the same number of connected components as the original and thus such that $\Delta\ell\leq 0$.
\begin{figure}[h]
    \centering
\includegraphics[scale=1]{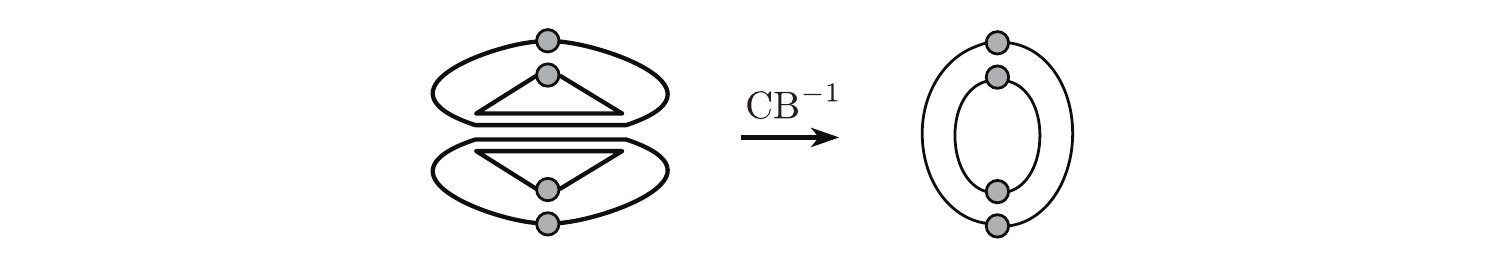}
    \caption{The configuration obtained from the move B when $\Delta c =3$ (left graph) can be replaced by a new configuration by using the move C instead, for which $\Delta c =1$ and $\Delta\varphi = -1$.\label{fig:p9}}
\end{figure}
\item Finally, if $\Delta c = 3$, the structure of the external legs clearly implies that we have three distinct cycles in the initial configuration. Moreover, from Fig.\ \ref{fig:p9} we see that if we apply the move C, we get two distinct cycles and two connected components at the end, thus $\Delta\varphi = -1$ and $\Delta c = 1$. But then, using $\Delta f\geq -2$, we get $\Delta\ell\leq 4-8+2+2=0$ and we conclude.
\end{itemize}

\begin{figure}[h]
    \centering
    \includegraphics[scale=.8]{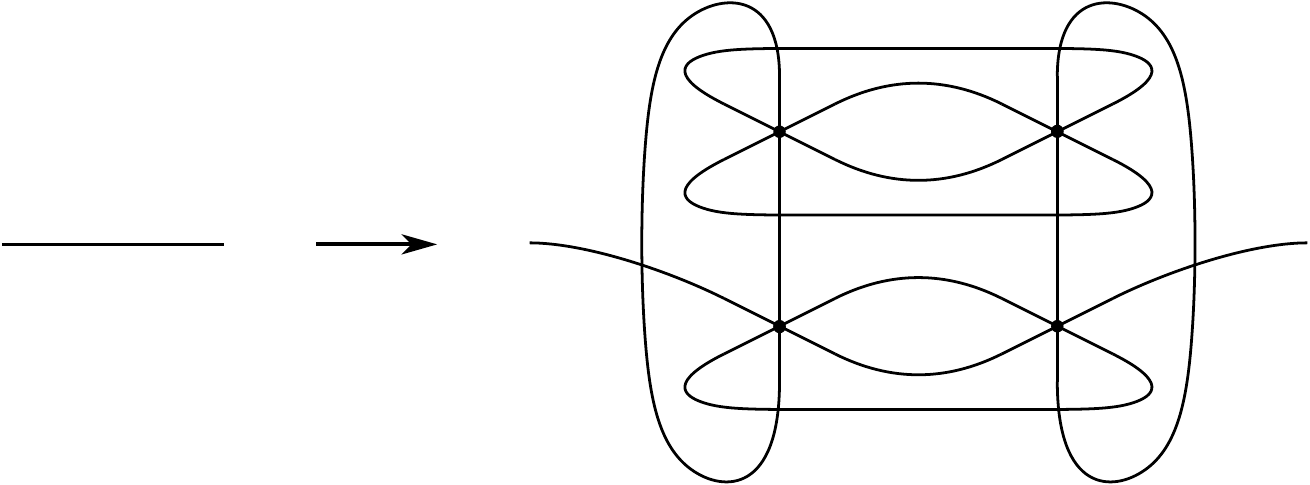}
    \caption{Move for which $\Delta v = 4$, $\Delta f=6$, $\Delta\varphi=5$ and thus $\Delta g=1$ and $\Delta\ell=0$. We remark that, the graph on the right being bipartite, $\ell=0$ graphs of arbitrary genus also exist in the bipartite wheel model.}
    \label{fig:move-increase-l-wheel}
\end{figure}

We have proven that $\ell\geq 0$. To prove that this is the best possible bound, we first note that the mirror melon construction in \cite{FRV} provides graphs with $\ell=0$ at $g=0$ (see also \cite{bonzom2015colored, prakash2019melonic}). We then use the move depicted in Fig.\ \ref{fig:move-increase-l-wheel} to generate $\ell=0$ graphs at any genus.\end{proof}

\section{Concluding remarks and perspectives}

There are several directions in which the work in the present paper could be extended. 

We first note that an infinite number of interactions supporting a well-defined large $D$ expansion can be generated straightforwardly from the two building blocks we have investigated (the tetrahedral and wheel interactions). Indeed, the boundary graph of any Feynman graph can itself serve as effective interaction\footnote{Such an effective interaction may be multi-trace.}, and with the appropriate scaling, is guaranteed to lead to a well-defined large $D$ expansion. As an illustration, one may consider an action of the form
\beq\label{eq:tree1}
S = \text{tr} \Bigl(X_\mu^\dagger X_\mu + \frac{1}{12 (N \sqrt{D})^{5}} \bigl( \lambda\,  X_{\mu_1} X_\nu X_{\mu_1} X_{\mu_2} X_\rho X_{\mu_2} X_{\mu_3} X_\nu X_{\mu_3} X_{\mu_4} X_\rho X_{\mu_4} + \mathrm{H.c.} \; \bigr) \Bigr) \, ,
\eeq
whose interaction term is represented in Fig. \ref{fig:tree1}, and has the same boundary structure as a bipartite tree of tetrahedral interactions. As a result, the action \eqref{eq:tree1} has a well-defined large $D$ expansion, which can be investigated with the methods of the present paper. 

Beyond such immediate generalizations, the technically simpler class of bipartite models \eqref{ModelBipartiteColored} is certainly worth exploring. In particular, it would be interesting to determine whether, in this context, all MST interactions support a large $D$ expansion, even in the absence of tracelessness condition (as the tetrahedral and wheel interactions do).

More ambitious, but also more physically relevant, is the study of Hermitian (or real symmetric and antisymmetric) models. In this case, the tracelessness condition is probably necessary in most examples. For instance, without it, the prism \eqref{prismdef}, and any other effective interaction decomposable into a tree of tetrahedra as in Fig.~\ref{fig:tree1}, would generate unbounded contributions through graphs analogous to the chains of tadpoles of Fig.~\ref{fig:tadpole_chain}. It would nonetheless be interesting to characterize the class of interactions, of which the wheel is one example, that can support a large $D$ expansion even in the absence of the tracelessness condition. In the longer term, one would like to obtain a full understanding of all the MST models and eventually address the most general models discussed in Section \ref{sec:conj}, for which several interaction terms are switched on at the same time. Note that it seems plausible that all MST interactions might be decomposable into trees of elementary lower-order irreducible building blocks, including, for instance, the tetrahedron and wheel. If true, the construction of a complete set of such irreducible interactions would provide a promising way forward. 

\begin{figure}[h]
    \centering
    \includegraphics{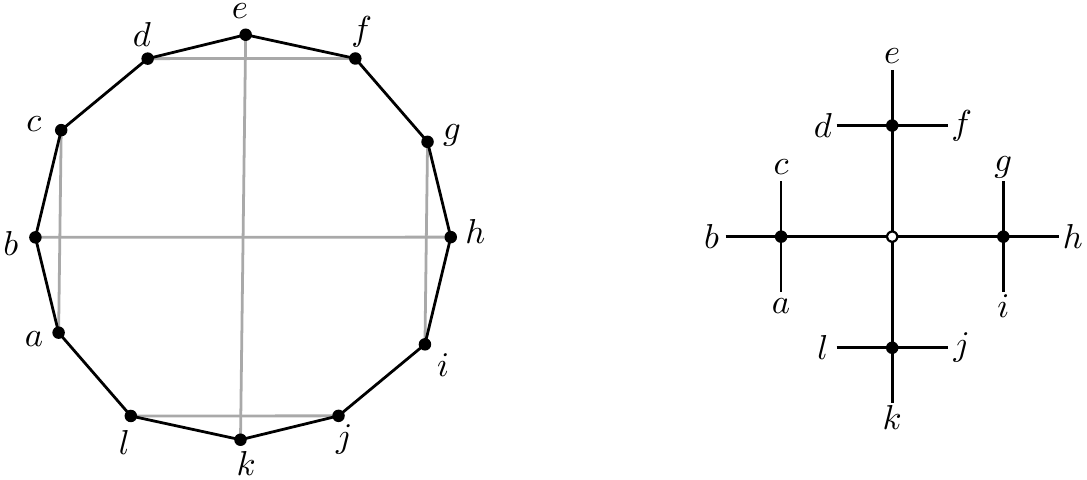}
    \caption{Interaction of order $12$ represented as a chord diagram (left): it is equivalent to a bipartite tree of tetrahedral interactions (right).}
    \label{fig:tree1}
\end{figure}

Another outstanding research direction is the problem of the classification of leading graphs. A first step is to find the optimal bounds, i.e.\ the precise form of the function $\eta$ defined in \eqref{chidef}. At planar order, we can always use the framework of $\uN_{\text{L}}\times\uN_{\text{R}}$ invariant models \cite{Frank}, for which the colored graph technology can help, but even in this case classifying the leading graphs remains difficult and only a few cases have been treated rigorously, see e.g.\ \cite{FRV}. Recent progress has been achieved in the context of the $\uN_{\text{L}}\times\uN_{\text{R}}$ invariant tetrahedral model, where $\ell=0$ graphs of \emph{arbitrary genus} have been characterized \cite{double-scaling-largeD}. One would like to generalize this classification to the case of the Hermitian model too. From this, one can obtain non-trivial physical informations to all order in the $1/N^{2}$ expansion. The $\ell=0$ graphs at any genus also yield the leading contribution in an interesting new double-scaling limit where $N$ and $D$ are sent to infinity while keeping $N^2/D$ finite.

Finally, we note that while the proofs we have presented have a general logic, the details that work for a particular model are not automatically generalizable to other cases, even when they superficially look similar. To address this shortcoming, we hope at term to develop a general formalism that will be able to supplant the colored combinatorics used in \cite{FRV} in the broader context we have started to explore in the present paper.

\subsection*{Acknowledgments}

SC would like to thank ULB for hospitality in the early stages of this collaboration, and Valentin Bonzom for stimulating discussions along the way. GV is grateful for the hospitality of Perimeter Institute where part of this work was carried out. GV is a Research Fellow at the Belgian F.R.S.-FNRS.

Research at Perimeter Institute is supported in part by the Government of Canada through the Department of Innovation, Science and Economic Development Canada and by the Province of Ontario through the Ministry of Colleges and Universities.

This research is supported in part by the Belgian Fonds National de la Recherche Scientifique FNRS (convention IISN 4.4503.15) and the F\'ed\'eration Wallonie-Bruxelles (Advanced ARC project ``Holography, Gauge Theories and Quantum Gravity'').

\subsection*{Data availability statement}

Data sharing is not applicable to this article as no new data were created or analyzed in this study.


\begin{thebibliography}{99}

\bibitem{tHooftplanar}{G.~'t~Hooft, \emph{A Planar Diagram Theory for Strong Interactions}, Nucl.\ Phys.\ \textbf{B72} (1974) 461.}

\bibitem{revMM}{F.~David, \emph{Planar Diagrams, Two-dimensional Lattice Gravity and Surface Models}, Nucl. Phys. {\bf B257} (1985) 45;\\
P.~Di Francesco, P.~H.~Ginsparg and J.~Zinn-Justin, \emph{2-D Gravity and Random Matrices}, Phys.\ Rept.\  {\bf 254} (1995) 1, arXiv:hep-th/9306153;\\
M.~L.~Mehta, \emph{Random Matrices}, Academic Press (2004);\\
P.~Di Francesco, \emph{2D Quantum Gravity, Matrix Models and Graph Combinatorics}, Lecture notes at the summer school Les Houches (2004), arXiv:math-ph/0406013;\\
G.~Anderson, A.~Guionnet and O.~Zeitouni, \emph{An Introduction to Random Matrices}, Cambridge University Press (2009);\\
B.~Eynard, T.~Kimura and S.~Ribault, \emph{Random Matrices,} arXiv:1510.04430 [math-ph].
}

\bibitem{revsolveMM}{A. Barbieri, G.M. Cicuta and E. Montaldi, \emph{From vector models to planar graphs}, Nuovo Cim. A {\textbf 84} (1984) 173--202;\\ C. M. Canali, G. M. Cicuta, L. Molinari and E. Montaldi, \emph{The quantum mechanical planar propagator: from vector models to planar graphs}, Nucl. Phys. B {\textbf 265} (1986) 485--505;\\ G. M. Cicuta, L. Molinari, E. Montaldi and F. Riva, \emph{Large rectangular random matrices}, J. Math. Phys. {\textbf 28} (1987) 1716--1718;\\ V.~A.~Kazakov, \emph{Solvable Matrix Models}, hep-th/0003064.}

\bibitem{vectorrev}{M.~Moshe and J.~Zinn-Justin, \emph{Quantum Field Theory in the Large $N$ Limit: A Review,} Phys.\ Rep.\ \textbf{385} (2003) 69, hep-th/0306133.}

\bibitem{Frank}{
F.~Ferrari, \emph{The Large $D$ Limit of Planar Diagrams}, Ann.\ Inst.\ Henri Poincar\'e Comb. Phys.\ Interact.\ \textbf{D6} (2019) 427--448, arXiv:1701.01171 [hep-th].
}

\bibitem{FRV}{
F.~Ferrari, V.~Rivasseau and G.~Valette, \emph{A New Large $N$ Expansion for General Matrix–Tensor Models}, Commun.\ Math.\ Phys.\  {\bf 370} (2019) 403, arXiv:1709.07366 [hep-th].
}

\bibitem{revtensors1}{
R.~Gurau, \emph{The $1/N$ Expansion of Colored Tensor Models,} Annales Henri Poincare {\bf 12} (2011) 829, arXiv:1011.2726 [gr-qc];\\
R.~Gurau and V.~Rivasseau, \emph{The $1/N$ Expansion of Colored Tensor Models in Arbitrary Dimension,} EPL {\bf 95} (2011) 50004, arXiv:1101.4182 [gr-qc];\\
R.~Gurau, \emph{The Complete $1/N$ Expansion of Colored Tensor Models in Arbitrary Dimension,} Annales Henri Poincare {\bf 13} (2012) 399, arXiv:1102.5759 [gr-qc];\\
R.~Gurau and J.~P.~Ryan, \emph{Colored Tensor Models - a Review}, SIGMA {\bf 8} (2012) 020, arXiv:1109.4812 [hep-th];\\
R.~Gurau, \emph{Random Tensors}, Oxford University Press (2017).}

\bibitem{Bonzom_2011}{
V.~Bonzom, R.~Gurau, A.~Riello and V.~Rivasseau, \emph{Critical Behavior of Colored Tensor Models in the Large $N$ Limit}, Nucl.\ Phys.\ B {\bf 853} (2011) 174, arXiv:1105.3122 [hep-th].
}

\bibitem{revtensors2}{V.~Bonzom, R.~Gurau and V.~Rivasseau, \emph{Random Tensor Models in the Large $N$ Limit: Uncoloring the Colored Tensor Models}, Phys.\ Rev.\ D {\bf 85} (2012) 084037, arXiv:1202.3637 [hep-th];\\
V.~Bonzom, \emph{New $1/N$ Expansions in Random Tensor Models,} JHEP {\bf 1306} (2013) 062, arXiv:1211.1657 [hep-th]];\\
V.~Bonzom, \emph{Large $N$ Limits in Tensor Models: Towards More Universality Classes of Colored Triangulations in Dimension $d\geq 2$}, SIGMA {\bf 12} (2016) 073, arXiv:1603.03570 [math-ph].}

\bibitem{revtensors3}{S.~Dartois, V.~Rivasseau and A.~Tanasa, \emph{The $1/N$ Expansion of Multi-Orientable Random Tensor Models}, Annales Henri Poincare {\bf 15} (2014) 965, arXiv:1301.1535 [hep-th];\\
A.~Tanasa, \emph{The Multi-Orientable Random Tensor Model, a Review}, SIGMA {\bf 12} (2016) 056, arXiv:1512.02087 [hep-th].}

\bibitem{CT}{
S.~Carrozza and A.~Tanasa, \emph{$O(N)$ Random Tensor Models}, Lett.\ Math.\ Phys.\  {\bf 106} (2016) 1531, arXiv:1512.06718 [math-ph].
}

\bibitem{Witten}{
E.~Witten, \emph{An SYK-Like Model Without Disorder}, J.\ Phys.\ A {\bf 52} (2019) 474002, arXiv:1610.09758 [hep-th].}

\bibitem{revtensors4}{I.~R.~Klebanov and G.~Tarnopolsky, \emph{Uncolored Random Tensors, Melon Diagrams, and the Sachdev-Ye-Kitaev Models,} Phys.\ Rev.\ D {\bf 95} (2017) 046004, arXiv:1611.08915 [hep-th].}

\bibitem{revCM}{S.~Sachdev and J.~Ye, \emph{Gapless Spin-Fluid Ground State in a Random Quantum Heisenberg Magnet,} Phys.\ Rev.\ Lett.\ 70 (1993) 3339, cond-mat/9212030;\\
O.~Parcollet and A.~Georges, \emph{Non-Fermi liquid regime of a doped Mott insulator,} Phys.\ Rev.\ B59 (1999) 5341, cond-mat/9806119;\\
A.~Georges, O.~Parcollet and S.~Sachdev, \emph{Mean-Field Theory of a Quantum Heisenberg Spin Glass,} Phys.\ Rev.\ Lett.\ 85 (2000) 840, cond-mat/9909239;\\
A.~Kitaev, \emph{A Simple Model of Quantum Holography,} KITP Program \emph{Entanglement in Strongly-Correlated Quantum Matter}, unpublished, see http://online.kitp.ucsb.edu/online/entangled15/;\\
S.~Sachdev, \emph{Holographic metals and the fractionalized Fermi liquid,} Phys.\ Rev.\ Lett.\ 105 (2010) 151602, arXiv:1006.3794.}
  
  
\bibitem{revdev}{
G.~S\'arosi, \emph{AdS$_{2}$ holography and the SYK model,} PoS Modave {\bf 2017}, 001 (2018), arXiv:1711.08482 [hep-th];\\
N.~Delporte and V.~Rivasseau, \emph{The Tensor Track V: Holographic Tensors}, arXiv:1804.11101 [hep-th];\\
I.~R.~Klebanov, F.~Popov and G.~Tarnopolsky, \emph{TASI Lectures on Large $N$ Tensor Models}, PoS TASI {\bf 2017} (2018) 004, arXiv:1808.09434 [hep-th];\\
R.~Gurau, \emph{Notes on Tensor Models and Tensor Field Theories}, arXiv:1907.03531 [hep-th];\\
D.~A.~Trunin, \emph{Pedagogical introduction to SYK model and 2D Dilaton Gravity,} arXiv:2002.12187 [hep-th].}


\bibitem{Azeyanagi:2017mre}{
T.~Azeyanagi, F.~Ferrari, P.~Gregori, L.~Leduc and G.~Valette, \emph{More on the New Large $D$ Limit of Matrix Models}, Annals Phys.\  {\bf 393} (2018) 308, arXiv:1710.07263 [hep-th].
}

\bibitem{double-scaling-largeD}{
D.~Benedetti, S.~Carrozza, R.~Toriumi and G.~Valette, \emph{Multiple Scaling Limits of $\mathrm{U}(N)^2 \times \mathrm{O}(D)$ Multi-matrix Models,} arXiv:2003.02100 [math-ph].}

\bibitem{Klebanov:2017nlk}{
I.~R.~Klebanov and G.~Tarnopolsky, \emph{On Large $N$ Limit of Symmetric Traceless Tensor Models}, JHEP {\bf 1710} (2017) 037, arXiv:1706.00839 [hep-th].
}

\bibitem{Benedetti:2017qxl}{
D.~Benedetti, S.~Carrozza, R.~Gurau and M.~Kolanowski, \emph{The $1/N$ Expansion of the Symmetric Traceless and the Antisymmetric Tensor Models in Rank Three}, Commun.\ Math.\ Phys.\  {\bf 371} (2019) 55, arXiv:1712.00249 [hep-th].
}

\bibitem{Gurau:2017qya}{
R.~Gurau, \emph{The $1/N$ Expansion of Tensor Models with Two Symmetric Tensors}, Commun.\ Math.\ Phys.\  {\bf 360} (2018) 985, arXiv:1706.05328 [hep-th].
}

\bibitem{Carrozza:MixedPerm}{
S.~Carrozza, \emph{Large $N$ Limit of Irreducible Tensor Models: $O(N)$ Rank-$3$ Tensors with Mixed Permutation Symmetry,} JHEP {\bf 1806} (2018) 039, arXiv:1803.02496 [hep-th].}

\bibitem{Carrozza:2018psc} 
  S.~Carrozza and V.~Pozsgay,
  \emph{SYK-like tensor quantum mechanics with $\mathrm{Sp}(N)$ symmetry,}
  Nucl.\ Phys.\ B {\bf 941}, 28 (2019),
  arXiv:1809.07753 [hep-th].

\bibitem{bonzom2015colored}{
V.~Bonzom, L.~Lionni and V.~Rivasseau, \emph{Colored Triangulations of Arbitrary Dimensions are Stuffed Walsh Maps}, The Electronic Journal of Combinatorics, 24 (2017) \#P1.56, arXiv:1508.03805 [math.CO].
}

\bibitem{prakash2019melonic}{
S.~Prakash and R.~Sinha,
\emph{Melonic Dominance in Subchromatic Sextic Tensor Models},
Phys. Rev. D \textbf{101}, no.12, 126001 (2020),
arXiv:1908.07178 [hep-th].}



\end{thebibliography}
\end{document}